\renewcommand{\underbar}{\underaccent{\bar}}
\newtheorem{theorem}{Theorem}[section]
\newtheorem{lem}{Lemma}[section]
\newtheorem{prop}{Proposition}[section]
\renewcommand{\P}{\mathbb{P}}
\newcommand{\Q}{\mathbb{Q}}
\newcommand{\R}{\mathbb{R}}
\newcommand{\N}{\mathbb{N}}
\newcommand{\bS}{\mathbb{S}}
\newcommand{\F}{\mathcal{F}}
\newcommand{\A}{\mathcal{A}}
\newcommand{\B}{\mathcal{B}}
\newcommand{\M}{\mathcal{M}}
\newcommand{\V}{\mathcal{V}}
\newcommand{\wtV}{\tilde{V}}
\newcommand{\blamb}{\bar{\lambda}}
\newcommand{\C}{\mathcal{C}}
\newcommand{\eps}{\varepsilon}
\newcommand{\osc}{\operatorname{osc}}
\begin{document}
\begin{frontmatter}

\title{Robust maximization of asymptotic growth under covariance uncertainty}
\runtitle{Robust maximization of asymptotic growth}

\begin{aug}
\author[A]{\fnms{Erhan} \snm{Bayraktar}\corref{}\thanksref{t1}\ead[label=e1]{erhan@umich.edu}}
\and
\author[A]{\fnms{Yu-Jui} \snm{Huang}\ead[label=e2]{jayhuang@umich.edu}}
\runauthor{E. Bayraktar and Y.-J. Huang}
\affiliation{University of Michigan}
\address[A]{Department of Mathematics\\
University of Michigan\\
530 Church Street\\
Ann Arbor, Michigan 48109\\
USA\\
\printead{e1}\\
\hphantom{E-mail: }\printead*{e2}} 
\end{aug}

\thankstext{t1}{Supported in part by the National Science Foundation
under an applied
mathematics research Grant and a Career Grant, DMS-09-06257 and
DMS-09-55463, respectively,
and in part by the Susan M. Smith Professorship.}

\received{\smonth{7} \syear{2011}}
\revised{\smonth{7} \syear{2012}}

%
\begin{abstract}
This paper resolves a question proposed in Kardaras and Robertson
[\textit{Ann. Appl. Probab.} \textbf{22} (2012) 1576--1610]: how to invest in a
robust growth-optimal way in a market where precise knowledge of the
covariance structure of the underlying assets is unavailable. Among an
appropriate class of admissible covariance structures, we characterize
the optimal trading strategy in terms of a generalized version of the
principal eigenvalue of a fully nonlinear elliptic operator and its
associated eigenfunction, by slightly restricting the collection of
nondominated probability measures.
\end{abstract}

%
\begin{keyword}[class=AMS]
\kwd{60G44}
\kwd{60H05}
\kwd{47J10}
\kwd{49K35}
\end{keyword}
\begin{keyword}
\kwd{Asymptotic growth rate}
\kwd{robustness}
\kwd{covariance uncertainty}
\kwd{Pucci's operator}
\kwd{principal eigenvalue for fully nonlinear elliptic operators}
\end{keyword}

\end{frontmatter}

\section{Introduction}\label{sec1}
In this paper, we consider the problem of how to trade optimally in a
market when the investing horizon is long and the dynamics of the
underlying assets are uncertain. For the case where the uncertainty
lies only in the instantaneous expected return of the underlying
assets, this problem has been studied by Kardaras and Robertson
\cite{KR10}. They identify the optimal trading strategy using a
generalized version of the principle eigenfunction for a linear
elliptic operator which depends on the given covariance structure of
the underlying assets. We intend to generalize their results to the
case where \textit{even} the covariance structure of the underlying
assets is not known precisely, which is suggested in~\cite{KR10},
Discussion. More precisely, we would like to determine a
\textit{robust} trading strategy under which the asymptotic growth rate
of one's wealth, defined below, can be maximized no matter which
admissible covariance structure materializes.

Uncertainty in variance (or, equivalently, in covariance) has been
drawing increasing attention. The main difficulty lies in the absence
of one single dominating probability measure among $\Pi$, the
collection of all probability measures induced by variance uncertainty.
In their pioneering works, Avellaneda, Levy and Paras~\cite{ALP95} and
Lyons~\cite{LyonsTJ95} introduced the uncertain volatility model (UVM),
where the volatility process is only known to lie in a fixed interval
$[\underbar{\sigma},\bar{\sigma}]$. Under the Markovian
framework, they obtained a duality formula for the superhedging price
of (nonpath-dependent) European contingent claims. Under a generalized
version of the UVM, Denis and Martini~\cite{DM06} extended the above
duality formula, by using the capacity theory, to incorporate
path-dependent European contingent claims. For the capacity theory to
work, they required some continuity of the random variables being
hedged. Taking a different approach based on the underlying partial
differential equations, Peng~\cite{Peng07} derived results very similar
to~\cite{DM06}. The connection between~\cite{DM06} and~\cite{Peng07}
was then elaborated and extended in Denis, Hu and Peng~\cite{DHP10}. On
the other hand, instead of imposing some continuity assumptions on the
random variables being hedged, Soner, Touzi and Zhang \cite
{STZ11-aggregation} chose to restrict slightly the collection of
nondominated probability measures, and derived under this setting a
duality formulation for the superhedging problem. With all these
developments, superhedging under volatility uncertainty has then been
further studied in Nutz and Soner~\cite{NS10} and Nutz \cite
{Nutz12-Quasi-Sure}, among others. Also notice that Fernholz and
Karatzas~\cite{FK11} characterized the highest return relative to the
market portfolio under covariance uncertainty. Moreover, a
controller-and-stopper game with controlled drift and volatility was
considered in~\cite{BH11}, which can be viewed as an optimal stopping
problem under volatility uncertainty.

While we also take covariance uncertainty into account, we focus on
robust growth-optimal trading, which is different by nature from the
superhedging problem. Here, an investor intends to find a trading
strategy such that her wealth process can achieve maximal growth rate,
in certain sense, uniformly over all possible probability measures in
$\Pi$, or at least in a large enough subset $\Pi^*$ of $\Pi$. Previous
research on this problem can be found in~\cite{KR10} and the references
therein. It is worth noting that this problem falls under the umbrella
of ergodic control, for which the dynamic programming heuristic cannot
be directly applied; see, for example, Arapostathis, Borkar and Ghosh \cite
{ABG-book-11} and Borkar~\cite{Borkar06-ICM}, where they consider
ergodic control problems with controlled drift.

Following the framework in~\cite{KR10}, we first observe that the
associated differential operator under covariance uncertainty is a
variant of Pucci's extremal operator. We define the ``principal
eigenvalue'' for this fully nonlinear operator, denoted by $\lambda^*$,
in some appropriate sense, and then investigate the connection between
$\lambda^*$ and the generalized principal eigenvalue in~\cite{KR10}
where the covariance structure is a priori given. This connection is
first established on smooth bounded domains, thanks to the theory of
continuous selection in Michael~\cite{Michael56} and Brown \cite
{Brown89}. Next, observing that a Harnack inequality holds under
current context, we extend the result to unbounded domains. Finally, as
a consequence of this connection, we generalize~\cite{KR10}, Theorem
2.1, to the case with covariance uncertainty: we characterize the
largest possible asymptotic growth rate as $\lambda^*$ (which is robust
among probabilities in a large enough subset $\Pi^*$ of $\Pi$) and
identify the optimal trading strategy in terms of $\lambda^*$ and the
corresponding eigenfunction; see Theorem~\ref{thmmain}.

The structure of this paper is as follows. In Section~\ref{sec2}, we introduce
the framework of our study and formulate the problem of robust
maximization of asymptotic growth under covariance uncertainty. In
Section~\ref{sec3}, we first introduce several different notions of the
generalized principal eigenvalue and then investigate the relation
between them. The main technical result we obtain is Theorem \ref
{thmlamb*=inf}, using which we resolve the problem of robust
maximization of asymptotic growth in Theorem~\ref{thmmain}.

\subsection{Notation}\label{sec1.1} We collect some notation and definitions here for
readers' convenience:
\begin{itemize}
\item\mbox{$|\cdot|$} denotes the Euclidean norm in $\R^n$, and $\mathrm{Leb}$ denotes
Lebesgue measure in~$\R^n$.
\item$B_\delta(x)$ denotes the open ball in $\R^n$ centered at $x\in
\R^n$ with radius $\delta>0$.
\item$\bar{D}$ denotes the closure of $D$, and $\partial D$ denotes
the boundary of $D$.
\item Given $x\in\R^n$ and $D_1,D_2\subset\R^n$, $d(x,D_1):=\inf\{|x-y|
\mid y\in D_1\}$ and $d(D_1,D_2):=\inf\{|x-y| \mid x\in D_1, y\in D_2\}$.
\item Given $D\subset\R^n$, $C(D)=C^0(D)$ denotes the set of continuous
functions on $D$. If $D$ is open, $C^{k}(D)$ denotes the set of
functions having derivatives of order $\le k$ continuous in $D$, and
$C^{k}(\bar{D})$ denotes the set of functions in $C^k(D)$ whose
derivatives of order $\le k$ have continuous extension on $\bar{D}$.
\item Given $D\subset\R^n$, $C^{k,\beta}(D)$ denotes the set of
functions in $C^{k}(D)$ whose derivatives of order $\le k$ are H\"{o}lder
continuous on $D$ with exponent $\beta\in(0,1]$. Moreover,
$C_{\mathrm{loc}}^{k,\beta}(D)$ denotes the set of functions
belonging to $C^{k,\beta}(K)$ for every compact subset $K$ of $D$.
\item We say $D\subset\R^n$ is a domain if it is an open connected set.
We say $D$ is a smooth domain if it is a domain whose boundary is of
$C^{2,\beta}$ for some $\beta\in(0,1]$.
%
\item Given $D\subset\R^n$ and $u\dvtx D\mapsto\R$, $\mathop{\osc
}_{D}:= \sup\{|u(x)-u(y)| \mid x,y\in D\}$.
\end{itemize}

\section{The set-up}\label{sec2}
Fix $d\in\N$. Consider an open connected set $E\subseteq\mathbb{R}^d$,
and two functions $\theta,\Theta\dvtx  E\mapsto(0,\infty)$. The following
assumption will be in force throughout this paper.
%
\begin{asm}\label{asmstanding}
(i) $\theta$ and $\Theta$ are of $C_{
\mathrm{loc}}^{0,\alpha}(E)$ for some $\alpha\in(0,1]$, and $\theta<\Theta
$ in $E$.

(ii) There exists a sequence $\{E_n\}_{n\in\mathbb{N}}$ of
bounded open convex subsets of $E$ such that $\partial E_n$ is of
$C^{2,\alpha'}$ for some $\alpha'\in(0,1]$, $\bar{E}_n\subset E_{n+1}$
for all $n\in\mathbb{N}$ and $E=\bigcup_{n=1}^{\infty}E_n$.
\end{asm}

Let $\bS^d$ denote the space of $d\times d$ symmetric matrices,
equipped with the norm
%
\begin{equation}
\label{norm1} \|M\|:=\max_{i=1,\ldots, d}\bigl|e_i(M)\bigr|,\qquad  M\in
\bS^d,
\end{equation}
where $e_i(M)$'s are the eigenvalues of $M$. In some cases, we will
also consider the norm $\|M\|_{\max}:=\max|m_{ij}|$, for $M=\{m_{ij}\}_{i,j}\in\bS^d$. These two norms are equivalent with $\|\cdot\|_{\max
}\le\|\cdot\|\le d\|\cdot\|_{\max}$.
%
\begin{defn}\label{defnC}
Let $\C$ be the collection of functions $c\dvtx E \mapsto\bS^d$ such that:
\begin{longlist}[(ii)]
\item[(i)] for any $x\in E$, $\theta(x) |\xi|^2 \le\xi'c(x)\xi\le
\Theta(x) |\xi|^2, \forall \xi\in\R^d\setminus\{0\}$;
\item[(ii)] $c_{ij}(x)$ is of $C_{\mathrm{loc}}^{1,\alpha}(E)$,
$1\le i,j\le d$.
\end{longlist}
\end{defn}


Let $\widehat{E}:= E\cup\triangle$ be the one-point compactification of
$E$, where $\triangle$ is identified with $\partial E$ if $E$ is
bounded with $\partial E$ plus the point at infinity if $E$ is
unbounded. 
Following the set-up in~\cite{KR10}, Section 1, or~\cite{Pinsky-book}, page
40, we consider the space $C([0,\infty),\widehat{E})$ of
continuous functions $\omega\dvtx [0,\infty)\mapsto\widehat{E}$, and define
for each $\omega\in C([0,\infty),\widehat{E})$ the exit times
\[
\zeta_n(\omega):= \inf\{t\ge0 \mid\omega_t\notin
E_n\},\qquad \zeta (\omega):= \lim_{n\to\infty}\zeta_n(
\omega).
\]
Then, we introduce $\Omega:=  \{\omega\in C([0,\infty),\widehat
{E}) \mid \omega_{\zeta+t}=\triangle$ for all $t\ge
0$, if $\zeta(\omega)<\infty \}$.
Let $X=\{X_t\}_{t\ge0}$ be the coordinate mapping process for $\omega
\in\Omega$. Set $\{\mathcal{B}_t\}_{t\ge0}$ to be the natural
filtration generated by $X$, and denote by $\mathcal{B}$ the smallest
$\sigma$-algebra generated by $\bigcup_{t\ge0}\mathcal{B}_t$.
Similarly, set $(\mathcal{F}_t)_{t\ge0}$ to be the right-continuous
enlargement of $(\mathcal{B}_t)_{t\ge0}$, and denote by $\F$ the
smallest $\sigma$-algebra generated by $\bigcup_{t\ge0}\F_t$.
%
\begin{rem}
For financial applications, $X=\{X_t\}_{t\ge0}$ represents the
(relative) price process of certain underlying assets, and each $c\in
\mathcal{C}$ represents a possible covariance structure that might
eventually materialize. In view of Definition~\ref{defnC}(i), the
extent of the uncertainty in covariance is captured by the functions
$\theta$ and $\Theta$: they act as the pointwise lower and upper bounds
uniformly over all possible covariance structures $c\in\C$.
\end{rem}

\subsection{The generalized martingale problem}\label{sec2.1}
For any $M=\{m_{ij}\}_{i,j}\in\bS^d$, define the operator $L^M$ which
acts on $f\in C^2(E)$ by
\[
\bigl(L^Mf\bigr) (x):=\frac{1}{2}\sum
_{i,j=1}^{d}m_{ij}\,\frac{\partial^2
f}{\partial x_i\,\partial x_j}(x)=
\frac{1}{2}\operatorname{Tr}\bigl[M D^2f(x)\bigr],\qquad x\in E.
\]
For each $c\in\C$, we define similarly the operator $L^{c(\cdot)}$ as
\[
\bigl(L^{c(\cdot)}f\bigr) (x):=\frac{1}{2}\sum
_{i,j=1}^{d}c_{ij}(x)\,\frac
{\partial^2 f}{\partial x_i\,\partial x_j}(x)=
\frac{1}{2}\operatorname{Tr}\bigl[c(x) D^2f(x)\bigr],\qquad x\in E.
\]
Given $c\in\C$, a solution to the generalized martingale problem on $E$
for the operator $L^{c(\cdot)}$ is a family of probability measures
$(\Q^c_{x})_{x\in\widehat{E}}$ on $(\Omega,\mathcal{B})$ such that
$\Q^c_{x}[X_0=x]=1$ and
\[
f(X_{s\wedge\zeta_n})-\int_{0}^{s\wedge\zeta_n}
\bigl(L^{c(\cdot)} f\bigr) (X_u)\,du
\]
is a $(\Omega,(\mathcal{B}_t)_{t\ge0},\Q^c_{x})$-martingale for all
$n\in\N$ and $f\in C^2(E)$.

The following result, taken from~\cite{Pinsky-book}, Theorem 1.13.1,
states that Assumption~\ref{asmstanding} guarantees the existence and
uniqueness of the solutions to the generalized martingale problem on
$E$ for the operator $L^{c(\cdot)}$, for each fixed $c\in\C$.
%
\begin{prop}\label{martprob}
Under Assumption~\ref{asmstanding}, for each $c\in\C$, there is a
unique solution $(\Q^c_{x})_{x\in\widehat{E}}$ to the generalized
martingale problem on $E$ for the operator~$L^{c(\cdot)}$.
\end{prop}
%
\begin{rem}
For each $c\in\C$, as mentioned in~\cite{KR10}, Section 1,
\[
f(X_{s\wedge\zeta_n})-\int_{0}^{s\wedge\zeta_n}
\bigl(L^{c(\cdot)} f\bigr) (X_u)\,du
\]
is also a $(\Omega,(\mathcal{F}_t)_{t\ge0},\Q^c_{x})$-martingale for
all $n\in\N$ and $f\in C^2(E)$, as $f$ and $L^{c(\cdot)}f$ are bounded
in each $E_n$. Now, by taking $f(x)=x^i$, $i=1,\ldots,d$ and
$f(x)=x^ix^j$ with $i,j,=1\cdots d$, we get $X_{t\wedge\zeta_n}$ is a
$(\Omega,(\mathcal{F}_t)_{t\ge0},\Q^c_{x})$-martingale\vspace*{1pt}
with quadratic covariation process
$\int_0^\cdot1_{\{t\le\zeta_n\}}c(X_t)\,dt$, for each $n\in\N$ and
$x\in\widehat{E}$.
\end{rem}

\subsection{Asymptotic growth rate}\label{sec2.2}
For any fixed $x_0\in E$, we will simply write $\Q^c=\Q^c_{x_0}$ for
all $c\in\C$, when there is no confusion on the initial value $x_0$ of
$X$. Let us denote by $\Pi$ the collection of probability measures on
$(\Omega,\F)$ which are locally absolutely continuous with respect to
$\Q^c$ (written $\P\ll_{\mathrm{loc}}\Q^c$) for some $c\in
\C$,
and for which the process $X$ does not explode. That is,
\[
\Pi:=\bigl\{\P\in P(\Omega,\F) \mid\exists c\in\C \mbox{ s.t. } \P
|_{\F
_t}\ll\Q^c|_{\F_t} \mbox{ for all } t\ge0\mbox{,
and } \P[\zeta <\infty ]=0\bigr\},
\]
where $P(\Omega,\F)$ denotes the collection of all probability measures
on $(\Omega,\F)$. As observed in~\cite{KR10}, Section 1, for each
$\P
\in\Pi$, $X$ is a $(\Omega,(\mathcal{F}_t)_{t\ge0},\P
)$-semimar\-tingale such that $\P[X\in C([0,\infty),E)]=1$. Moreover, if
we take $c\in\C$ such that \mbox{$\P\ll_{\mathrm{loc}}\Q^c$}, then $X$
admits the representation
\[
X_\cdot= x_0+\int_0^\cdot
b^\P_t \,dt +\int_0^\cdot
\sigma (X_t)\,dW_t^\P,
\]
where\vspace*{1pt} $W^\P$ is a standard $d$-dimensional Brownian motion on $(\Omega
,(\mathcal{F}_t)_{t\ge0},\P)$, $\sigma$ is the unique symmetric
strictly positive definite square root of $c$, and $b^\P$ is a
$d$-dimensional $\{F_t\}_{t\ge0}$-progressively measurable process.

Let $(Z_t)_{t\ge0}$ be an adapted process. For $\P\in\Pi$, define
\[
\P\mbox{-}\liminf_{t\to\infty}Z_t:= \operatorname{ess}\operatorname{sup}^{\P}
\Bigl\{\chi \mbox{ is } \F\mbox{-measurable} \bigm| \lim_{t\to\infty}
\P[Z_t\ge \chi]=1 \Bigr\}.
\]
For any $d$-dimensional predictable process $\pi$ which is
$X$-integrable under $\Q^c$ for all $c\in\C$, we can define the process
$V^\pi_\cdot:=1+\int_{0}^{\cdot}\pi'_t\,dX_t$ under $\Q^c$ for all
\mbox{$c\in\C
$}. Let $\V$ denote the collection of all such processes $\pi$ which in
addition satisfy the following: for each $c\in\C$, $\Q^c[V^\pi_t>
0]=1,
\forall t\ge0$. Here, $\pi\in\V$ represents an admissible trading
strategy and $V^\pi$ represents the corresponding wealth process. Now,
for any $\pi\in\V$, we define the asymptotic growth rate of $V^\pi$
under $\P\in\Pi$ as
\[
g(\pi;\P):=\sup \Bigl\{\gamma\in\R \bigm| \P\mbox {-}\liminf_{t\to
\infty}
\bigl(t^{-1}\log V^\pi_t\bigr)\ge\gamma, \P
\mbox{-a.s.} \Bigr\}.
\]

\subsection{The problem}\label{sec2.3}\label{subsectheproblem}

The problem we consider in this paper is how to choose a trading
strategy $\pi^*\in \V$ such that the wealth process $V^{\pi^*}$ attains
the robust maximal asymptotic growth rate under all possible
probabilities in $\Pi$, or at least, in a large enough subset of $\Pi$
which readily contains all ``nonpathological'' cases. More precisely,
in Theorem~\ref{thmmain} below, we will construct a large enough
suitable subset $\Pi^*$ of $\Pi $, and determine
\[
\sup_{\pi\in\V}\inf_{\P\in\Pi^*}g(\pi;\P),
\]
the robust maximal asymptotic growth rate (robust in $\Pi^*$).
Moreover, we will find $\pi^*\in\V$ such that $V^{\pi^*}$ attains (or
surpasses) the maximal growth rate no matter which $\P\in\Pi^*$
materializes. This generalizes~\cite{KR10}, Theorem~2.1, to the case
with covariance uncertainty.

\section{The min--max result}\label{sec3}

In this section, we will first introduce generalized versions of the
principal eigenvalue for the linear operator $L^{c(\cdot)}$ and a fully
nonlinear operator $F$ defined below. Then, we will investigate the
relation between them on smooth bounded domains, and eventually extend
the result to the entire domain~$E$. The main technical result we
obtain is Theorem~\ref{thmlamb*=inf}. Finally, by using
Theorem~\ref{thmlamb*=inf}, we are able to resolve in
Theorem~\ref{thmmain} the problem proposed in
Section~\ref{subsectheproblem}.

%
%
%
%
%
%
%
%

Let us first recall the definition of Pucci's extremal operators. Given
$0<\lambda\le\Lambda$, we define for any $M\in\mathbb{S}^d$ the
following matrix operators:
%
\begin{eqnarray}
\label{defnPucci} \M^+_{\lambda,\Lambda}(M)&:=&\Lambda\sum
_{e_i(M)>0}e_i(M) +\lambda \sum
_{e_i(M)<0}e_i(M),
\nonumber\\[-8pt]\\[-8pt]
\M^-_{\lambda,\Lambda}(M)&:=&\lambda\sum_{e_i(M)>0}e_i(M)
+\Lambda \sum_{e_i(M)<0}e_i(M).\nonumber
\end{eqnarray}
From~\cite{CC-book}, page 15, we see that these operators can be
expressed as
\[
\M^+_{\lambda,\Lambda}(M)=\sup_{A\in\A(\lambda,\Lambda)}\operatorname{Tr}(AM),\qquad
\M^-_{\lambda,\Lambda}(M)=\inf_{A\in\A(\lambda,\Lambda
)}\operatorname{Tr}(AM),
\]
where $\A(a,b)$ denotes the set of matrices in $\bS^d$ with eigenvalues
lying in $[a,b]$ for some real numbers $a\le b$. For general properties
of Pucci's extremal operators, see, for example,~\cite{Pucci66} and
\cite{CC-book}, Section 2.2. Now, let us define the operator
$F\dvtx E\times
\mathbb{S}^d\mapsto\R$ by
%
\begin{equation}
\label{defnF} F(x,M):=\frac{1}{2}\M^+_{\theta(x),\Theta(x)}(M)=\frac{1}{2}
\sup_{A\in\A
(\theta(x),\Theta(x))}\operatorname{Tr}(AM).
\end{equation}

Let $D$ be an open connected subset of $E$. Fixing $c\in\C$, we
consider, for any given $\lambda\in\mathbb{R}$, the cone of positive
harmonic functions with respect to $L^{c(\cdot)}+\lambda$ as
%
\begin{equation}
H^c_\lambda(D):= \bigl\{\eta\in C^2(D) \mid
L^{c(\cdot)}\eta+\lambda \eta= 0 \mbox{ and } \eta>0 \mbox{ in } D \bigr\}
\end{equation}
and set
%
\begin{equation}
\lambda^{*,c}(D):= \sup\bigl\{\lambda\in\mathbb{R} \mid
H^c_\lambda (D)\neq \varnothing\bigr\}.
\end{equation}
Note that if $D$ is a smooth bounded domain, $\lambda^{*,c}(D)$
coincides with the principal eigenvalue for $L^{c(\cdot)}$ on $D$; see,
for example,~\cite{Pinsky-book}, Theorem 4.3.2. In our case, since we
do not require the boundedness of $D$, $\lambda^{*,c}(D)$ is a
generalized version of the principal eigenvalue for $L^{c(\cdot)}$ on
$D$, which is also used in~\cite{KR10}. On the other hand, for any
$\lambda\in\R$, we define
%
\begin{equation}
\label{defHlamb} H_\lambda(D):= \bigl\{\eta\in C^2(D) \mid F
\bigl(x,D^2\eta\bigr) + \lambda\eta \le0 \mbox{ and } \eta>0
\mbox{ in } D\bigr\}
\end{equation}
and set
%
\begin{equation}
\label{deflamb*} \lambda^*(D):=\sup\bigl\{\lambda\in\R\mid H_\lambda(D)
\neq\varnothing\bigr\},
\end{equation}
which is a generalized version of the principal eigenvalue for the
fully nonlinear operator $F$ on $D$. For auxiliary purposes, we also
consider, for any $\lambda\in\R$, the set
%
\begin{equation}
\label{defHlamb+} H^+_{\lambda}(D):= \bigl\{\eta\in C(\bar{D})\mid F
\bigl(x,D^2\eta\bigr) + \lambda \eta \le0 \mbox{ and } \eta>0
\mbox{ in } D\bigr\},
\end{equation}
where the inequality holds in the viscosity sense. From this, we define
%
\begin{equation}
\label{deflamb+} \lambda^{+}(D):=\sup\bigl\{\lambda\in\R\mid
H^+_{\lambda}(D)\neq \varnothing\bigr\}.
\end{equation}
For the special case where $D$ is a smooth bounded domain, $\lambda^+(D)$ is the principal half-eigenvalue of the operator $F$ on $D$ that
corresponds to positive eigenfunctions; see, for example,~\cite{QS08}.
%
\begin{lem}\label{lemexistence}
Given a smooth bounded domain $D\subset E$, there exists $\eta_{D}\in
C(\bar{D})$ such that $\eta_{D}>0$ in $D$ and satisfies in the
viscosity sense the equation
%
\begin{equation}
\label{Dirichlet} \cases{ F\bigl(x,D^2\eta_{D}\bigr) +
\lambda^+(D)\eta_{D}=0, &\quad in $D$,
\cr
\eta_{D}=0, &\quad on $\partial D$.}
\end{equation}
Moreover, for any pair $(\lambda, \eta)\in\R\times C(\bar{D})$ with
$\eta>0$ in $D$ which solves
%
\begin{equation}
\label{Dirichlet2} \cases{ F\bigl(x,D^2\eta\bigr) + \lambda\eta=0,
&\quad
in $D$,
\cr
\eta=0, &\quad on $\partial D$,}
\end{equation}
$(\lambda, \eta)$ must be of the form $ (\lambda^+(D),\mu\eta_{D} )$ for some $\mu>0$.
\end{lem}
\begin{pf}
Let us introduce some properties of $F$. By definition, we see that
%
\begin{eqnarray}
\label{H1}
&\displaystyle F(x,\mu M)=\mu F(x,M) \qquad\mbox{for any } x\in E \mbox{ and } \mu\ge
0;&
\\
\label{H2}
&\displaystyle  F \mbox{ is convex in } M.&
\end{eqnarray}
Also, by~\cite{CC-book}, Lemma 2.10(5), for any $x\in E$ and $M, N\in
\mathbb{S}^d$, we have
%
\begin{equation}
\label{H3} \tfrac{1}{2}\M^-_{\theta(x),\Theta(x)}(M-N) \le F(x,M)-F(x,N)\le
\tfrac
{1}{2}\M^+_{\theta(x),\Theta(x)}(M-N).\hspace*{-28pt}
\end{equation}
Finally, we observe from (\ref{defnPucci}) that $F$ can be expressed as
\[
F(x,M)=\frac{1}{2}\M^+_{\theta(x),\Theta(x)}(M)=\frac{1}{2} \biggl\{
\Theta (x)\sum_{e_i(M)>0}e_i(M)+\theta(x)\sum
_{e_i(M)<0}e_i(M) \biggr\}.
\]
From the continuity of $\theta$ and $\Theta$ in $x$, and the continuity
of $e_i(M)$ in $M$ for each $i$ (see, e.g.,~\cite{Meyer-book-00}, page
497), we conclude that
%
\begin{equation}
\label{H4} F \mbox{ is continuous in } E\times\mathbb{S}^d.
\end{equation}
Now, thanks to (\ref{H1})--(\ref{H4}) and~\cite{QS08}, Lemma 1.1, this
lemma follows from~\cite{QS08}, Theorems 1.1, 1.2.
\end{pf}

\subsection{\texorpdfstring{Regularity of $\eta_{D}$}
{Regularity of eta D}}\label{sec3.1}

In this subsection, we will
show that, for any smooth bounded domain $D\subset E$, the continuous
viscosity solution $\eta_{D}$ given in Lemma~\ref{lemexistence} is
actually smooth up to the boundary $\partial D$.\vadjust{\goodbreak} 

Let us consider the operator $J\dvtx \bar{D}\times\bS^d\mapsto\R$
defined by
\[
J(x,M):= F(x,M)+\lambda^+(D)\eta_{D}(x).
\]

\begin{lem}\label{lemHolder}
$\eta_{D}$ belongs to $C^{0,\beta}(\bar{D})$, for any $\beta\in(0,1)$.
\end{lem}
\begin{pf}
For any $x\in\bar{D}$ and $M, N\in\bS^d$ with $M\ge N$, we deduce from
(\ref{H3}) and (\ref{defnPucci}) that
%
\begin{eqnarray}
\label{H5} \frac{\theta_D}{2}\operatorname{Tr}(M-N)&\le&\frac{\theta(x)}{2}\operatorname{Tr}(M-N)=
\frac{1}{2}\M^-_{\theta(x),\Theta(x)}(M-N)\nonumber\\
&\le& F(x,M)-F(x,N)
\le\frac{1}{2}\M^+_{\theta(x),\Theta(x)}(M-N)\\
&=&\frac{\Theta
(x)}{2}\operatorname{Tr}(M-N)
\le\frac{\Theta_D}{2}\operatorname{Tr}(M-N),\nonumber
\end{eqnarray}
where $\theta_D:=\min_{x\in\bar{D}}\theta(x)$ and $\Theta_D:=\max_{x\in
\bar{D}}\Theta(x)$. On the other hand, recall that under
Assumption \ref
{asmstanding}, $\theta,\Theta\in C^{0,\alpha}(\bar{D})$. Let $K$
be a
H\"{o}lder constant for both $\theta$ and $\Theta$ on
$\bar
{D}$. By (\ref{defnF}) and (\ref{defnPucci}), for any $x,y\in\bar
{D}$ and $M\in\bS^d$,
%
\begin{eqnarray}
\label{H6}\qquad
&&
\bigl|F(x,M)-F(y,M)\bigr|\nonumber\\
&&\qquad\le\frac{1}{2} \biggl\{\bigl|\Theta(x)-\Theta(y)\bigr|\sum
_{e_i(M)>0}e_i(M)+\bigl|\theta(x)-\theta(y)\bigr|\sum
_{e_i(M)<0}\bigl|e_i(M)\bigr| \biggr\}
\\
&&\qquad\le Kd\|M\||x-y|^\alpha.\nonumber
\end{eqnarray}
Under (\ref{H1}), (\ref{H5}) and (\ref{H6}),~\cite{BD07}, Proposition
6, states that every bounded nonnegative viscosity solution to
%
\begin{equation}
\label{Jeqn} J\bigl(x,D^2\eta\bigr)=0 \mbox{ in } D,\qquad
\eta=0 \mbox{ on } \partial D
\end{equation}
is of the class $C^{0,\beta}(\bar{D})$ for all $\beta\in(0,1)$. Thanks
to Lemma~\ref{lemexistence}, $\eta_{D}$ is indeed a bounded
nonnegative viscosity solution to the above equation, and thus the
lemma follows.
\end{pf}
%
\begin{lem}\label{lemetaDsolvesJ}
$\eta_{D}$ is the unique continuous viscosity solution to (\ref{Jeqn}).
\end{lem}
\begin{pf}
By Lemma~\ref{lemexistence}, we immediately have the viscosity
solution property. To prove the uniqueness, it suffices to show that a
comparison principle holds for $J(x,D^2\eta)=0$. For any $x\in\bar{D}$
and $M, N\in\bS^d$ with $M\ge N$, we see from the definition of $J$ and
(\ref{H5}) that
%
\begin{equation}
\label{H5forJ} \frac{\theta_D}{2}\operatorname{Tr}(M-N)\le J(x,M)-J(x,N)\le
\frac{\Theta_D}{2}\operatorname{Tr}(M-N).
\end{equation}
Thanks to this inequality, we conclude from~\cite{KK07}, Theorem 2.6,
that a comparison principle holds for $J(x,D^2\eta)=0$.
\end{pf}

The following regularity result is taken from~\cite{Safonov88}, Theorem 1.2.
%
\begin{lem}\label{lemregularity}
Suppose $H\dvtx D\times\bS^d\mapsto\R$ satisfies the following conditions:
\begin{longlist}[(a)]
\item[(a)] $H$ is lower convex in $M\in\bS^d$;
\item[(b)] there is a $\nu\in(0,1]$ s.t. $\nu|\xi|^2 \le H(x,M+\xi
\xi')-H(x,M) \le\nu^{-1}|\xi|^2$ for all $\xi\in\R^d$;
\item[(c)] there is a $K_1>0$ s.t. $|H(x,0)|\le K_1$ for all $x\in D$;
\item[(d)] there are $K_2, K_3>0$ and $\beta\in(0,1)$ s.t. $\langle
H(\cdot,M)\rangle^{(\beta)}_{D}\le K_2 \sum_{i,j}|m_{ij}| +K_3$ for
all $M=\{m_{ij}\}_{i,j}\in\bS^d$, where
$\langle u\rangle^{(\beta)}_{D}:=\mathop{\sup}_{x\in D,
\rho>0} \rho^{-\beta}\mathop{\osc}_{D\cap B_{\rho}(x)} u$,
for any $u\dvtx D\mapsto\R$.
\end{longlist}
Then
\[
H\bigl(x,D^2\eta\bigr)=0 \mbox{ in } D,\qquad \eta=0 \mbox{ on }
\partial D
\]
has a unique solution in the class $C^{2,\beta}(\bar{D})$ if $\beta
\in
(0,\bar{\alpha})$, where the constant $\bar{\alpha}\in(0,1)$ depends
only on $d$ and $\nu$.
\end{lem}
%
\begin{prop}\label{propexistence}
$\eta_{D}$ belongs to $C^{2,\beta}(\bar{D})$ for any $\beta\in
(0,\alpha\wedge\bar{\alpha})$, where $\bar{\alpha}$ is given in
Lemma~\ref{lemregularity}. This in particular implies $\lambda^+(D)=\lambda^*(D)$, and thus we have
%
\begin{equation}
\label{Dirichletsmooth} \cases{ F\bigl(x,D^2\eta_{D}
\bigr) + \lambda^*(D)\eta_{D}=0, &\quad in $D$,
\cr
\eta_{D}=0, &\quad
on $\partial D$.}
\end{equation}
\end{prop}
\begin{pf}
Let us show that the operator $J$ satisfies conditions (a)--(d) in
Lem\-ma~\ref{lemregularity}. It is obvious from (\ref{H2}) that $J$
satisfies (a). Since $\xi\xi'\ge0$ and
$\operatorname{Tr}(\xi\xi')=|\xi|^2$ for all $\xi\in\R^d$, we see from
(\ref{H5forJ}) that $J$ satisfies (b). By the continuity of $\eta_{D}$
on $\bar{D}$, (c) is also satisfied as
$|J(x,0)|=0+\lambda^+(D)\eta_{D}(x) \le
K_1:=\lambda^{+}(D)\max_{\bar{D}}\eta_{D}$. To prove (d), let us first
observe that: for any $\beta\in(0,1)$ and $u\in C^{0,\beta}(D)$ with a
H\"{o}lder\vspace*{-1pt} constant $K$, we have $\mathop{\osc}_{D\cap B_{\rho}(x)}
u\le K\rho^\beta$, which yields $\langle u\rangle^{(\beta)}_{D}\le K$.
Recall that $\theta,\Theta\in C^{0,\alpha}(D)$ (Assumption
\ref{asmstanding}) and $\eta_{D}\in C^{0,\beta}(\bar{D})$ for all
$\beta\in(0,1)$ (Lemma~\ref{lemHolder}). Now, for any
$\beta\in(0,\alpha\wedge\bar{\alpha})$, we have $\theta, \Theta,
\eta_{D}\in C^{0,\beta}(D)$. Let $K'$ be a H\"{o}lder constant for all
the three functions. Then, from the definition of $J$, the calculation
(\ref{H6}) and the fact that $\|M\|\le d\|M\|_{\max}\le
d\sum_{i,j}|m_{ij}|$ for any $M=\{m_{ij}\}_{i,j}\in\bS^d$, we conclude
that $J(\cdot,M)\in C^{0,\beta}(D)$ with a H\"{o}lder constant $d^2
(\sum_{i,j}|m_{ij}| )K'+\lambda^+(D)K'$. It follows that $\langle
J(\cdot,M)\rangle^{(\beta)}_{D}\le d^2 (\sum_{i,j}|m_{ij}|
)K'+\lambda^+(D)K'$. Thus, (d) is satisfied for all
$\beta\in(0,\alpha\wedge\bar{\alpha})$, with $K_2:=d^2K'$ and
$K_3:=\lambda^+(D)K'$. Now, we conclude from Lemma~\ref{lemregularity}
that there is a unique solution in $C^{2,\beta}(\bar{D})$ to
(\ref{Jeqn}) for all $\beta\in(0,\alpha\wedge\bar{\alpha})$. However,
in view of Lemma~\ref{lemetaDsolvesJ}, this unique
$C^{2,\beta}(\bar{D})$ solution can only be $\eta_{D}$.

The fact that $\eta_{D}$ is of the class $C^{2,\beta}(\bar{D})$ and
solves (\ref{Dirichlet}) implies that $\lambda^{+}(D)\le\lambda^*(D)$.
Since we have the opposite inequality just from the definitions of
$\lambda^+(D)$ and $\lambda^*(D)$, we conclude that
$\lambda^{+}(D)=\lambda^*(D)$. Then (\ref{Dirichlet}) becomes (\ref
{Dirichletsmooth}).
\end{pf}


\subsection{\texorpdfstring{Relation between $\lambda^*(D)$ and $\lambda^{*,c}(D)$}
{Relation between lambda*(D) and lambda*,c(D)}}\label{sec3.2}

In this subsection, we will show that $\lambda^*(D) = \inf_{c\in\C}
\lambda^{*,c}(D)$ for any smooth bounded domain $D$.

Let us first state a maximum principle on small domains for the
operator $G_\delta\dvtx E\times\R\times\bS^d\mapsto\R$ defined by
\[
G_\delta(x,u,M):= -F(x,-M) -\delta|u|=\tfrac{1}{2}
\M^-_{\theta
(x),\Theta
(x)}(M)-\delta|u|,
\]
where $\delta$ can be any nonnegative real number.
%
\begin{lem}\label{lemmaxprinciple}
For any smooth bounded domain $D\subset E$, there exists \mbox{$\eps_0>0$},
depending on $D$, such that if a smooth bounded domain $U\subset D$
satisfies \mbox{$\operatorname{Leb}(U)<\eps_0$}, then if $\eta\in C(\bar{U})$
is a
viscosity solution to
\[
\cases{ G_\delta\bigl(x,\eta,D^2 \eta\bigr)\le0, &\quad in $U$,
\cr
\eta\ge0, &\quad on $\partial U$, }
\]
then $\eta\ge0$ in $U$.
\end{lem}
\begin{pf}
Consider the operator $\bar{F}\dvtx E\times\R\times\bS^d\mapsto\R$ defined
by $\bar{F}(x,u,\break M):=F(x,M)+\delta|u|$. For any $x\in E$, $u,v\in\R$ and
$M,N\in\bS^d$, we see from (\ref{H3}) that
%
\begin{eqnarray}
\label{H3barF} \qquad\tfrac{1}{2}\M^-_{\theta(x),\Theta(x)}(M-N)-\delta|u-v| &\le&\bar
{F}(x,u,M)-\bar{F}(x,v,N)\nonumber\\[-8pt]\\[-8pt]
&\le&\tfrac{1}{2}\M^+_{\theta(x),\Theta
(x)}(M-N)+
\delta|u-v|.\nonumber
\end{eqnarray}
Moreover, by (\ref{H4}), we immediately have
%
\begin{equation}
\label{H4barF} \bar{F}(x,0,M)=F(x,M) \mbox{ is continuous in } E\times
\bS^d.
\end{equation}
Noting that $G_\delta(x,u,M)=-\bar{F}(x,-u,-M)$, we have $G_\delta
(x,u,M)-G_\delta(x,v,\break N)=\bar{F}(x,-v,-N)-\bar{F}(x,-u,-M)$. Then, by
using (\ref{H3barF}), we get
%
\begin{eqnarray}
\label{DFforG} G_\delta(x,u-v,M-N)&=&\tfrac{1}{2}
\M^-_{\theta(x),\Theta
(x)}(M-N)-\delta |u-v|\nonumber\\
&\le& G_\delta(x,u,M)-G_\delta(x,v,N)
\nonumber\\[-8pt]\\[-8pt]
&\le&\tfrac{1}{2}\M^+_{\theta(x),\Theta(x)}(M-N)+\delta|u-v|\nonumber\\
&=&\bar {F}(x,u-v,M-N),\nonumber
\end{eqnarray}
which implies that the operator $G_\delta$ satisfies the $(D_F)$
condition in~\cite{QS08}, pa\-ge~107 (with $F$ replaced by $\bar{F}$).
Now, thanks to (\ref{H3barF})--(\ref{DFforG}), this lemma follows
from~\cite{QS08}, Theorem 3.5.
\end{pf}
%
\begin{prop}\label{proplamb*D<inflamb*cD}
For any smooth bounded domain $D\subset E$, $\lambda^*(D) \le\inf_{c\in
\C} \lambda^{*,c}(D)$.
\end{prop}
\begin{pf}
Assume the contrary that $\lambda^*(D)>\inf_{c\in\C}\lambda^{*,c}(D)$.
Then there exists $\bar{c}\in\C$ such that\vadjust{\goodbreak} $\lambda^*(D)>\lambda^{*,\bar
{c}}(D)$. Take $\bar{\eta}\in C^2(D)$ with $\bar{\eta}>0$ in $D$
such that
\[
\cases{ L^{\bar{c}(\cdot)}\bar{\eta}+\lambda^{*,\bar{c}}(D)\bar{\eta }=0, &\quad in
$D$,
\cr
\bar{\eta}=0, &\quad on $\partial D$.}
\]
From the definition of $F$, we see that $\bar{\eta}$ is a viscosity
subsolution to
%
\begin{equation}
\label{barceqn} F\bigl(x,D^2\eta\bigr)+\lambda^{*,\bar{c}}(D)
\eta=0 \qquad\mbox{in } D.
\end{equation}
On the other hand, the function $\eta_{D}$, given in Lemma \ref
{lemexistence}, is a viscosity supersolution to (\ref{barceqn}) as
it solves (\ref{Dirichletsmooth}) and $\lambda^*(D)>\lambda^{*,\bar
{c}}(D)$.
We claim that there exists $\ell>0$ such that $\bar{\eta}\le\ell
\eta_{D}$ in $D$. We will show this by following an argument used in
the proof of Theorem 4.1 in~\cite{QS08}. Take a compact subset $K$ of $D$
such that $\operatorname{Leb}(D\setminus K)<\eps_0$, where $\eps_0$ is given in
Lemma~\ref{lemmaxprinciple}. By the continuity of $\bar{\eta}$ and
$\eta_{D}$, there exists $\ell>0$ such that $\ell\eta_{D}-\bar
{\eta}>0$ on $K$. Consider the function $f_\ell:=\ell\eta_{D}-\bar
{\eta}$. By (\ref{H3}) and (\ref{H1}),
\begin{eqnarray*}
G_{\lambda^{*,\bar{c}}(D)}\bigl(x,f_\ell,D^2f_\ell
\bigr)&=&-F\bigl(x,-D^2f_{\ell
}\bigr)-\lambda^{*,\bar{c}}(D)|f_\ell|\\
&\le&-F\bigl(x,-D^2f_{\ell}\bigr)+\lambda^{*,\bar
{c}}(D)
f_\ell
\\
&\le&\ell F\bigl(x,D^2\eta_{D}\bigr)-F
\bigl(x,D^2\bar{\eta}\bigr)+\lambda^{*,\bar
{c}}(D) (\ell
\eta_{D}-\bar{\eta})\\
&\le&0 \qquad\mbox{in } D,
\end{eqnarray*}
where the last inequality follows from the supersolution property of
$\eta_{D}$ and the subsolution property of $\bar{\eta}$ to
(\ref{barceqn}). Since $f_\ell\ge0$ on $\partial(D\setminus K)$, we
obtain from Lemma~\ref{lemmaxprinciple} that $f_\ell\ge0$ on
$D\setminus K$. Thus, we conclude that $\bar{\eta}\le\ell\eta_{D}$
in $D$. Now, by Perron's method we can construct a continuous viscosity
solution $v$ to (\ref{barceqn}) on $D$ such that $\bar{\eta}\le
v\le
\ell\eta_{D}$. This in particular implies $v>0$ in $D$ and the pair
$(\lambda^{*,\bar{c}}(D),v)$ solves (\ref{Dirichlet2}). Recalling that
$\lambda^+(D)=\lambda^*(D)$ from Proposition~\ref{propexistence}, we
see that this is a contradiction to Lemma~\ref{lemexistence} as
$\lambda^{*,\bar{c}}(D)<\lambda^*(D)=\lambda^+(D)$.
\end{pf}

To prove the opposite inequality $\lambda^*(D)\ge\inf_{c\in\C
}\lambda^{*,c}(D)$ for any smooth bounded domain $D\subset E$, we will make use
of the theory of continuous selection pioneered by~\cite{Michael56},
and follow particularly the formulation in~\cite{Brown89}. For a brief
introduction to this theory and its adaptation to the current context,
see Appendix~\ref{secappendix}.
%
\begin{prop}\label{proplamb_d_inflambc_d}
Let $D\subset E$ be a smooth bounded domain. If $D$ is convex, then
$\lambda^*(D)\ge\inf_{c\in\C}\lambda^{*,c}(D)$.
\end{prop}
\begin{pf}
We will construct a sequence $\{\bar{c}'_m\}_{m\in\N}\subset\C$ such
that
\[
\limsup_{m\to\infty}\lambda^{*,\bar{c}'_m}(D)\le\lambda^*(D),
\]
which gives the desired result.

\textit{Step} 1: \textit{Constructing} $\{\bar{c}'_m\}_{m\in\N}$.
Recall that $\eta_{D}\in C^{2}(\bar{D})$ by Proposition~\ref
{propexistence}. Then, we deduce from (\ref{defnPucci}) that there
exists $\kappa>0$ such that
%
\begin{eqnarray}
\label{prop32-1}
&&\max\bigl\{\bigl|\lambda-\lambda'\bigr|,\bigl|\Lambda-
\Lambda'\bigr|\bigr\}<\kappa \nonumber\\[-8pt]\\[-8pt]
&&\quad\Rightarrow\quad \bigl|\M^+_{\lambda,\Lambda}
\bigl(D^2\eta_{D}(x)\bigr)-\M^+_{\lambda',\Lambda'}
\bigl(D^2\eta_{D}(x)\bigr)\bigr|<2/m\qquad\mbox{for all } x\in
\bar{D}.\hspace*{-30pt}\nonumber
\end{eqnarray}
Also,\vspace*{1pt} since \mbox{$\|\cdot\|_{\max}\le\|\cdot\|$}, the map $(M,x)\mapsto
L^M\eta_{D}(x)$ is continuous in $M$, uniformly in $x\in\bar{D}$.
It follows that there exists $\beta>0$ such that
%
\begin{equation}
\label{prop32-2} \|N-M\|<\beta \quad\Rightarrow\quad \bigl|L^N
\eta_{D}(x)-L^M\eta_{D}(x)\bigr|<1/m \qquad\mbox{for all }
x\in\bar{D}.\hspace*{-35pt}
\end{equation}
Set $\xi:=\min_{x\in\bar{D}}(\Theta-\theta)(x)>0$ (recall that
$\Theta
>\theta$ in $E$ under Assumption~\ref{asmstanding}). Now, by taking
$\gamma:=\theta+\frac{\kappa\wedge\xi}{4}$ and $\Gamma:=\Theta
-\frac
{\kappa\wedge\xi}{4}$ in Proposition~\ref{propcmexists}, we obtain
that there is a continuous function $c_m\dvtx \bar{D}\mapsto\bS^d$ such that
%
\begin{eqnarray}
\label{prop32-3} c_m(x)&\in&\A\bigl(\gamma(x),\Gamma(x)\bigr)
\quad\mbox{and}\nonumber\\[-8pt]\\[-8pt]
F_{\gamma,\Gamma
}\bigl(x,D^2\eta_{D}\bigr)&\le&
L^{c_m(\cdot)}\eta_{D}(x)+1/m \qquad\mbox{for all }
x\in\bar{D},\nonumber
\end{eqnarray}
where $F_{\gamma,\Gamma}(x,M)$ is defined in (\ref{defnFgamma}). By
mollifying the function $c_m$, we can construct a function $\bar
{c}_m\dvtx \bar{D}\mapsto\bS^d$ such that $\bar{c}_m\in C^\infty(\bar{D})$
and $\|\bar{c}_m(x)-c_m(x)\|_{\max}<(\beta\wedge\frac{\kappa
\wedge\xi
}{4})/d$ for all $x\in\bar{D}$ (more precisely, $c_m\in C(\bar{D})$
implies that for any open set $D'$ containing $\bar{D}$, there is a
function $\tilde{c}_m\in C(D')$ such that $\tilde{c}_m=c_m$ on $\bar
{D}$; see, e.g.,~\cite{GT-book-01}, Lemma 6.37. Then by mollifying
$\tilde
{c}_m$, we get a sequence of smooth functions converging uniformly to
$\tilde{c}_m$ on $\bar{D}$). It follows that
%
\begin{eqnarray}
\label{prop32-4} \bigl\|\bar{c}_m(x)-c_m(x)\bigr\|&\le& d\bigl\|
\bar{c}_m(x)-c_m(x)\bigr\|_{\max}\nonumber\\[-8pt]\\[-8pt]
&<&\beta \wedge
\frac{\kappa\wedge\xi}{4} \qquad\mbox{for all } x\in\bar{D}.\nonumber
\end{eqnarray}
Combining (\ref{prop32-1})--(\ref{prop32-4}), for each $x\in\bar
{D}$, we see that $\bar{c}_m(x)\in\A(\theta(x),\Theta(x))$ and
%
\begin{eqnarray}
\label{prop32-5}\qquad F\bigl(x,D^2\eta_{D}\bigr)&=&
\frac{1}{2}\M^+_{\theta(x),\Theta
(x)}\bigl(D^2\eta_{D}(x)
\bigr)<\frac{1}{2}\M^+_{\gamma(x),\Gamma(x)}\bigl(D^2
\eta_{D}(x)\bigr)+\frac
{1}{m}\nonumber\\
&=&F_{\gamma,\Gamma}
\bigl(x,D^2\eta_{D}\bigr)+\frac
{1}{m}
\le L^{c_m(\cdot)}\eta_{D}(x)+\frac{2}{m}\\
&\le&
L^{\bar
{c}_m(\cdot
)}\eta_{D}(x)+\frac{3}{m}.\nonumber
\end{eqnarray}
Now, take some $\bar{c}'_m\in\C$ such that $\bar{c}'_m$ and $\bar{c}_m$
coincide on $\bar{D}$. Then (\ref{prop32-5}) and the fact that
$F(x,D^2\eta_{D})+\lambda^*(D)\eta_{D}=0$ in $D$
(Proposition~\ref{propexistence}) imply
%
\begin{equation}
\label{hm} |h_m|<3/m \mbox{ in } D\qquad \mbox{where }
h_m:=L^{\bar{c}'_m(\cdot
)}\eta_{D}+\lambda^*(D)
\eta_{D}.
\end{equation}

\textit{Step} 2: \textit{Showing} $\limsup_{m\to\infty}\lambda^{*,\bar
{c}'_m}(D)\le\lambda^*(D)$. In the following, we will use the argument
in~\cite{Fujisaki99}, Section 3, starting from (3.3).
Let $\eta_m$ be the eigenfunction associated with the eigenvalue problem
\[
\cases{ L^{\bar{c}'_m(\cdot)}\eta+ \lambda^{*,\bar{c}'_m}(D)\eta=0, &\quad in $D$,
\cr
\eta=0, &\quad on $\partial D$.}
\]
Pick $x_0\in D$.\vspace*{1pt} We define the normalized eigenfunction $\tilde{\eta
}_m:=\frac{\eta_{D}(x_0)}{\eta_m(x_0)}\eta_m$. By~\cite{Pucci66b}, lemma on
page 789, there exist $k_1, k_2>0$, independent of $m$, such that
%
\begin{equation}
\label{etambdd} k_1 d(x,\partial D)\le\tilde{
\eta}_m(x)\le k_2 d(x,\partial D)\qquad \mbox{for all } x
\in D.
\end{equation}
Also, thanks to (\ref{H1}) and (\ref{H5}), we may apply
\cite{BD07}, Proposition 1, and obtain some $\delta>0$ and $C>0$ such that
$\eta_{D}(x)\le Cd(x,\partial D)$ if $d(x,\partial D)<\delta$.
Thus, we conclude that
%
\begin{equation}
\label{tmbdd} 1 \le t_m:=\sup_{x\in D}
\frac{\eta_{D}(x)}{\tilde{\eta}_m(x)}< \infty.
\end{equation}
By setting $s_m:=t_m\lambda^*(D)/\lambda^{*,\bar{c}'_m}(D)$, we deduce
from the definitions of $t_m$ and $s_m$ that
%
\begin{equation}
\label{L^c<0}\qquad L^{\bar{c}'_m(\cdot)}(s_m\tilde{
\eta}_m-\eta_{D})+ h_m=-t_m
\lambda^*(D)\tilde{\eta}_m+\lambda^*(D)\eta_{D}\le0
\qquad\mbox{in } D.
\end{equation}
Let $w_m$ be the unique solution of the class $C^{2,\alpha}(D)\cap
C(\bar{D})$ to the equation
%
\begin{equation}
\label{wmeqn} L^{\bar{c}'_m(\cdot)}w_m= h_m
\mbox{ in } D,\qquad w_m=0 \mbox{ on } \partial D.
\end{equation}
Note that by~\cite{Fujisaki99}, Remark 3.1, the convexity of $D$ and
(\ref{hm}) guarantee the existence of a constant $M>0$, independent of
$m$, such that
%
\begin{equation}
\label{wmbdd} \bigl|w_m(x)\bigr|\le\frac{M d(x,\partial D)}{m}
\qquad\mbox{for all } x\in D.
\end{equation}
Combining (\ref{L^c<0}) and (\ref{wmeqn}), we get
\[
\cases{ L^{\bar{c}'_m(\cdot)}(s_m\tilde{\eta}_m-
\eta_{D}+w_m) \le0, &\quad in $D$,
\cr
s_m\tilde{
\eta}_m-\eta_{D}+w_m=0, &\quad on $\partial D$. }
\]
We then conclude from the maximum principle that $s_m\tilde{\eta
}_m-\eta_{D}+w_m\ge0$ in $D$. From the definition of $s_m$, this
inequality gives
\[
\frac{\lambda^*(D)}{\lambda^{*,\bar{c}'_m}(D)}\ge\frac{\eta_{D}(x)}{t_m\tilde{\eta}_m(x)}-\frac{w_m(x)}{t_m\tilde{\eta
}_m(x)}\ge
\frac{\eta_{D}(x)}{t_m\tilde{\eta}_m(x)}-\frac{M}{k_1 m} \qquad\mbox{for all } x\in D,
\]
where the last inequality follows from (\ref{wmbdd}), (\ref{tmbdd})
and (\ref{etambdd}). Now, take a sequence $\{x_k\}_{k\in\N}$ in $D$
such that $\frac{\eta_{D}(x_k)}{\eta_m(x_k)}\to t_m$. By plugging
$x_k$ into the above inequality and taking limit in $k$, we get
\[
\frac{\lambda^*(D)}{\lambda^{*,\bar{c}'_m}(D)}\ge1-\frac{M}{k_1 m},
\]
which implies $\lambda^*(D)\ge\limsup_{m\to\infty}\lambda^{*,\bar
{c}'_m}(D)$.\vadjust{\goodbreak}
\end{pf}

Combining Propositions~\ref{proplamb*D<inflamb*cD} and \ref
{proplamb_d_inflambc_d}, we have the following result:
%
\begin{theorem}\label{thmlamb*D=inflamb*cD}
Let $D\subset E$ be a smooth bounded domain. If $D$ is convex, $\lambda^*(D)=\inf_{c\in\C}\lambda^{*,c}(D)$.
\end{theorem}

\subsection{\texorpdfstring{Relation between $\lambda^*(E)$ and $\lambda^{*,c}(E)$}
{Relation between lambda*(E) and lambda*,c(E)}}\label{sec3.3}

In this subsection, we will first characterize $\lambda^*(E)$ in terms
of $\lambda^*(E_n)$, and then generalize Theorem \ref
{thmlamb*D=inflamb*cD} from bounded domains to the entire space $E$.

Let us first consider some Harnack-type inequalities. Note that for any
$D\subset\R^d$ and $p\in[1,\infty)$, we will denote by $\mathcal
{L}^p(D)$ the space of measurable functions $f$ satisfying $(\int_D|f(x)|^p\,dx)^{1/p}<\infty$.
%
\begin{lem} \label{lem_harnack}
Let $D\subset E$ be a smooth bounded domain. Let $H\dvtx E\times\bS^d\mapsto\R$ be such that
%
\begin{eqnarray}
\label{Hbdd}
&&\exists 0<\lambda\le\Lambda \quad\mbox{s.t.}\quad
\M^-_{\lambda
,\Lambda
}(M)\le H(x,M)\le\M^+_{\lambda,\Lambda}(M) \nonumber\\[-8pt]\\[-8pt]
&&\eqntext{\displaystyle \mbox{for all }
(x,M)\in D\times\bS^d.}
\end{eqnarray}
If $\{u_n\}_{n\in\N}$ is sequence of continuous nonnegative viscosity
solutions to
%
\begin{equation}
\label{uneqn} H\bigl(x,D^2u_n\bigr)+
\delta_n u_n=f_n \qquad\mbox{in } D,
\end{equation}
where $\{\delta_n\}_{n\in\N}$ is a bounded sequence in $[0,\infty)$ and
$f_n\in\mathcal{L}^d(D)$, then we have:
\begin{longlist}[(ii)]
\item[(i)] for any compact set $K\subset D$, there is a constant
$C>0$, depending only on $D$, $K$, $d$, $\lambda$, $\Lambda$, $\sup_n\delta_n$, such that
%
\begin{equation}
\label{Harnack} \sup_{K}u_n\le C \Bigl\{
\inf_K u_n+\|f_n\|_{\mathcal{L}^d(D)} \Bigr\}.
\end{equation}
\item[(ii)] Suppose $H$ satisfies (\ref{H1}). Given $x_0\in D$ and
$R_0>0$ such that\break $B_{R_0}(x_0)\subset D$, there exists a constant
$C>0$, depending only on $R_0$, $d$, $\lambda$, $\Lambda$, $\sup_n\delta_n$, such that for any $0<R<R_0$,
%
\begin{equation}
\label{Harnack2} \sup_{\bar{B}_R(x_0)}u_n\le C \Bigl\{
\inf_{\bar{B}_R(x_0)} u_n+R^2\| f_n
\|_{\mathcal{L}^d(B_{R_0}(x_0))} \Bigr\}.
\end{equation}
As a consequence, if we assume further that $\{u_n\}_{n\in\N}$ is
uniformly bounded, and $\{f_n\}_{n\in\N}$ is bounded in $\mathcal
{L}^d(D)$, then for any compact connected set $K\subset D$ and $\beta
\in
(0,1)$, $u_n\in C^{0,\beta}(K)$ for all $n\in\N$, with one fixed
H\"{o}lder constant.
\end{longlist}
\end{lem}
\begin{pf}
(i) Set $\delta^*:=\sup_n\delta_n<\infty$. By (\ref{Hbdd}), we have
\[
\M^+_{\lambda,\Lambda}\bigl(D^2u_n\bigr)+
\delta^*u_n \ge H\bigl(x,D^2u_n\bigr)+
\delta_n u_n \ge\M^-_{\lambda,\Lambda}\bigl(D^2u_n
\bigr)-\delta^*u_n \qquad\mbox{in } D.
\]
In view\vspace*{1pt} of (\ref{uneqn}), we obtain $\M^+_{\lambda,\Lambda
}(D^2u_n)+\delta^*u_n \ge f_n \ge\M^-_{\lambda,\Lambda
}(D^2u_n)-\delta^*u_n$ in $D$. Thanks to this inequality, estimate (\ref{Harnack})
follows from~\cite{QS08}, Theorem~3.6.\vadjust{\goodbreak}

(ii) Thanks to estimate (\ref{Harnack}) and
\cite{GT-book-01}, Lemma 8.23, we can prove part~(ii) by following the argument in
the proof of Corollary 3.2 in~\cite{BD10}. For a detailed proof, see
Appendix~\ref{secappendixB}.
\end{pf}
%
\begin{prop}\label{proplamb*=limlamb*}
$\lambda^*(E)= \downarrow\lim_{n\to\infty}\lambda^*(E_n)$ and there
exists some $\eta^*\in H_{\lambda^*(E)}(E)$ such that
%
\begin{equation}
\label{eta*eqn} F\bigl(x,D^2\eta^*\bigr)+\lambda^*(E)\eta^*=0
\qquad\mbox{in } E.
\end{equation}
\end{prop}
\begin{pf}
It is obvious from the definition that $\lambda^*(E_n)$ is decreasing\break
in $n$ and $\lambda^*(E)\le\lambda^*(E_n)$ for all $n\in\N$. It follows
that $\lambda^*(E)\le\lambda_0:= \downarrow\lim_{n\to\infty
}\lambda^*(E_n)$. To prove the opposite inequality, it suffices to show that
$H_{\lambda_0}(E)\neq\varnothing$. To this end, we take $\eta_n$ as the
eigenfunction given in Lemma~\ref{lemexistence} with $D=E_n$. Pick an
arbitrary $x_0\in E_1$, and define $\tilde{\eta}_n(x):=\frac{\eta_n(x)}{\eta_n(x_0)}$ such that $\tilde{\eta}_n(x_0)=1$ for all $n\in
\N$.\vspace*{1pt}

Fix $n\in\N$. In view of Proposition~\ref{propexistence}, $\{\tilde
{\eta}_m\}_{m> n}$ is a sequence of positive smooth solutions to
%
\begin{equation}
\label{etameqn} F\bigl(x,D^2\tilde{
\eta}_m\bigr)+\lambda^*(E_m)\tilde{\eta}_m=0
\qquad\mbox{in } E_{n+1}.
\end{equation}
From the definition of $F$, we see that $F$ satisfies (\ref{Hbdd}) in
$E_n$ with $\lambda=\min_{x\in\bar{E}_n}\theta(x)$ and $\Lambda
=\max_{x\in\bar{E}_n}\Theta(x)$. Thus, by Lemma~\ref{lem_harnack}(i),
there is a constant $C>0$, independent of $m$, such that
\[
\sup_{\bar{E}_n}\tilde{\eta}_m\le C\inf_{\bar{E}_n}\tilde{
\eta }_m\le C,
\]
which implies $\{\tilde{\eta}_m\}_{m>n}$ is uniformly bounded in $\bar
{E}_n$. On the other hand, given $\beta\in(0,1)$, Lemma
\ref{lem_harnack}(ii) guarantees that $\tilde{\eta}_m\in C^{0,\beta
}(\bar {E}_n)$ for all $m>n$, with a fixed H\"{o}lder
constant. Therefore, by using the Arzela--Ascoli theorem, we conclude
that $\tilde{\eta}_m$ converges uniformly, up to some subsequence, to
some function $\eta^*$ on $\bar{E}_n$. Thanks to the stability result
of viscosity solutions (see, e.g.,~\cite{FS-book-06}, Lemma II.6.2), we
obtain from (\ref{etameqn}) that $\eta^*$ is a nonnegative continuous
viscosity solution in $E_n$ to
%
\begin{equation}
\label{eta*eqn2} F\bigl(x,D^2\eta^*\bigr)+\lambda_0
\eta^*=0.
\end{equation}
Furthermore, since $\eta^*(x_0)=\lim_{m\to\infty}\eta_m(x_0)=1$, we
conclude from~\cite{BD07}, Theorem 2, a strict maximum principle for
eigenvalue problems of fully nonlinear operators, that $\eta^*>0$ in
$E_n$. Finally, noting that for any $\beta\in(0,1)$, $\eta^*\in
C^{0,\beta}(\bar{E}_n)$ with its H\"{o}lder constant same
as $\tilde{\eta}_m$'s, we may use Lemma~\ref{lemregularity}, as in the
proof of Proposition~\ref{propexistence}, to show that $\eta^*\in
C^2(\bar{E}_n)$.

Since the results above hold for each $n\in\N$, we conclude that
$\eta^*$ belongs to $C^2(E)$, takes positive values in $E$ and satisfies
(\ref{eta*eqn2}) in $E$. It follows that $\eta^*\in H_{\lambda
_0}(E)$, which yields $\lambda_0\le\lambda^*(E)$. Therefore, we get
$\lambda^*(E)=\lambda_0$, and then (\ref{eta*eqn2}) becomes (\ref
{eta*eqn}).
\end{pf}

Now, we are ready to present the main technical result of this paper.
%
\begin{theorem}\label{thmlamb*=inf}
$\lambda^*(E) = \inf_{c\in\C} \lambda^{*,c}(E)$.
\end{theorem}
\begin{pf}
Thanks to Theorem 4.4.1(i) in~\cite{Pinsky-book}, Theorem
\ref{thmlamb*D=inflamb*cD} and Proposition~\ref{proplamb*=limlamb*},
we have
\[
\inf_{c\in\C}\lambda^{*,c}(E) = \inf_{c\in\C}
\inf_{n\in\N
}\lambda^{*,c}(E_n) = \inf_{n\in\N}
\inf_{c\in\C}\lambda^{*,c}(E_n) =\inf_{n\in\N
}
\lambda^*(E_n)=\lambda^*(E).\quad
\]
\upqed\end{pf}
%
\begin{rem}
For the special case where $\theta$ and $\Theta$ are merely two
positive constants, the derivation of Theorem~\ref{thmlamb*=inf} can
be much simpler. Since the operator $F(x,M)=\frac{1}{2}\M^+_{\theta
,\Theta}(M)$ is now Pucci's operator with elliptic constants $\theta$
and $\Theta$, we may apply~\cite{BD10}, Theorem 3.5, 
and obtain a positive H\"{o}lder continuous viscosity
solution $\eta^*$ to
\[
F\bigl(x,D^2\eta^*\bigr)+\bar{\lambda}(E)\eta^*=0
\qquad\mbox{in } E,
\]
where $\bar{\lambda}(E):=\inf\{\lambda^+(D) \mid D\subset E$ is a
smooth bounded domain$\}$. Then, Lem\-ma~\ref{lemregularity} implies
$\eta^*$ is actually smooth, and thus
$\bar{\lambda}(E)\le\lambda^*(E)$. Since
$\bar{\lambda}(E)\ge\lambda^*(E)$ by definition, we conclude that
$\bar{\lambda}(E)=\lambda^*(E)$. Now, thanks to~\cite{Pinsky-book},
Theorem~4.4.1(i), and the standard result $\lambda^+(E_n) =
\inf_{c\in\C}\lambda^{*,c}(E_n)$ for Pucci's operator (see, e.g.,
\cite{BEQ05}, Proposition 1.1(ii), and~\cite{Pucci66b}, Theorem I), we
get
\begin{eqnarray*}
\inf_{c\in\C}\lambda^{*,c}(E) &=& \inf_{c\in\C}
\inf_{n\in\N
}\lambda^{*,c}(E_n) = \inf_{n\in\N}
\inf_{c\in\C}\lambda^{*,c}(E_n) \\
&=& \inf_{n\in
\N}
\lambda^+(E_n)=\blamb(E)=\lambda^*(E).
\end{eqnarray*}

However, as pointed out in~\cite{KR10}, Discussion, it is not
reasonable for financial applications to assume that each $c\in\C$ is
both continuous and uniformly elliptic in~$E$. Therefore, we consider
in this paper the more general setting where $\theta$ and $\Theta$ are
functions defined on $E$, which includes the case without uniform ellipticity.
\end{rem}

\subsection{Application}\label{sec3.4}

By Theorem~\ref{thmlamb*=inf} and mimicking the proof of Theorem~2.1 in
\cite{KR10}, we have the following result. Note that, for simplicity,
we will write $\lambda^*=\lambda^*(E)$.
%
\begin{theorem}\label{thmmain}
Take $\eta^*\in H_{\lambda^*}(E)$ and normalize it so that $\eta^*(x_0)=1$. Define $\pi^*_t:= e^{\lambda^*t}\nabla\eta^*(X_t)$ for all
$t\ge0$, and set
\[
\Pi^*:= \Bigl\{\P\in\Pi \bigm| \P\mbox{-}\liminf_{t\to
\infty
}
\bigl(t^{-1}\log\eta^*(X_t)\bigr) \ge0, \P\mbox{-a.s.}
\Bigr\}.
\]
Then, we have $\pi^*\in\V$ and $g(\pi^*;\P)\ge\lambda^*$ for all
$\P\in
\Pi^*$. Moreover,
%
\begin{equation}
\label{min-maxproblem} \lambda^* = \sup_{\pi\in\V}\inf_{\P\in\Pi^*}g(
\pi;\P) = \inf_{\P\in\Pi
^*}\sup_{\pi\in\V}g(\pi;\P).\vadjust{\goodbreak}
\end{equation}
\end{theorem}
\begin{pf}
Set $V^*_t:=V^{\pi^*}_t=1+\int_0^t
e^{\lambda^*s}\nabla\eta^*(X_s)'\,dX_s$, $t\ge0$. By applying
It\^{o}'s rule to the process $e^{\lambda^*t}\eta^*(X_t)$ we
see that $V^*_t\ge e^{\lambda ^*t}\eta^*(X_t)>0$ $\P$-a.s. for all
\mbox{$\P\in\Pi$}. This already implies $\pi^*\in\V$. Also, by the
construction of $\Pi^*$, we have $\P
$-$\liminf_{t\to\infty}(t^{-1}\log(V^*_t))\ge\lambda^*$ $\P $-a.s. for
all $\P\in\Pi^*$. It follows that $g(\pi^*;\P)\ge\lambda^*$ for all $\P
\in\Pi^*$, which in turn implies $\lambda^* \le\break\sup_{\pi\in\V }\inf_{\P
\in\Pi^*}g(\pi;\P)$.\looseness=1

Now,\vspace*{-1pt} for any $c\in\C$ and $n\in\N$, set $\lambda^{*,c}_n
= \lambda^{*,c}(E_n)$, take $\eta^{*,c}_n\in
H^c_{\lambda^{*,c}_n}(E_n)$ with $\eta^{*,c}_n(x_0)=1$ and define the
process $\wtV^c_n(t):=e^{\lambda ^{*,c}_n t}\eta^{*,c}_n(X_t)$. Note
that under any $\P\in\Pi$ such that $\P\ll_{\mathrm{loc}}\Q^c$, we have
$\wtV^c_n(t) = 1+\int_{0}^{t}(\pi^{*,c}_n)'_s \,dX_s$ with
$(\pi^{*,c}_n)_{t}:=e^{\lambda^{*,c}_n t}\nabla \eta^{*,c}_n(X_t)$.
This, however, may not be true for general $\P\in \Pi $. As shown in
the proof of Theorem 2.1 in~\cite{KR10}, for any fixed $c\in\C$ and
$n\in\N$, we have the following: (1) there exists a solution
$(\P^{*,c}_{x,n})_{x\in E_n}$ to the generalized martingale problem for
the operator $L^{c(\cdot),\eta^{*,c}_n}:=L^{c(\cdot)} +
c\nabla\log\eta^{*,c}_n\cdot\nabla$; (2) the coordinate process $X$
under $(\P^{*,c}_{x,n})_{x\in E_n}$ is recurrent\vspace*{1pt} in $E_n$;
(3) $\P^{*,c}_{x,n}\ll_{\mathrm{loc}}\Q^c$ (note that we conclude from the
previous two conditions that $\P^{*,c}_{x,n}\in\Pi^*$); (4) the process
$V^{\pi}/\wtV^c_n$ is a nonnegative $\P^{*,c}_{x,n}$-supermartingale
for all $\pi\in\V$. We therefore have the analogous result
$g(\pi;\P^{*,c}_n)\le g(\pi^{*,c}_n;\P^{*,c}_n)\le \lambda^{*,c}_n$ for
all $\pi\in\V$, which yields $\inf_{\P\in
\Pi^*}\sup_{\pi\in\V}g(\pi;\P)\le\lambda^{*,c}_n$. Now, thanks to
Theorem~4.4.1(i) in~\cite{Pinsky-book} and Theorem~\ref{thmlamb*=inf},
we have
\[
\inf_{\P\in\Pi^*}\sup_{\pi\in\V}g(\pi;\P) \le\inf_{c\in\C
}
\lim_{n\to
\infty}\lambda^{*,c}_n =\lambda^*.
\]
\upqed\end{pf}
%
\begin{rem}
Note that the normalized eigenfunction $\eta^*$ in the statement of
Theorem~\ref{thmmain} may not be unique. It follows that the set of
measures $\Pi^*$ and the min--max problem in (\ref{min-maxproblem}) may
differ with our choice of $\eta^*$. In spite of this, we would like to
emphasize the following:
\begin{longlist}[(ii)]
\item[(i)] No matter which $\eta^*$ we choose, the robust maximal
asymptotic growth rate $\lambda^*$ stays the same.
\item[(ii)] At the first glance, it may seem restrictive to work with
$\Pi^*$. However, by the same calculation in~\cite{KR10}, Remark 2.2,
we see that: no matter which $\eta^*$ we choose, $\Pi^*$ is large
enough to contain all the probabilities in $\Pi$ under which $X$ is
tight in~$E$, and thus corresponds to those $\P\in\Pi$ such that $X$
is stable.
\end{longlist}
\end{rem}

\begin{appendix}\label{app}
\section{\texorpdfstring{Continuous selection results needed for Proposition \lowercase{\protect\ref{proplamb_d_inflambc_d}}}
{Continuous selection results needed for Proposition 3.3}}\label{secappendix}

The goal of this Appendix is to state and prove Proposition \ref
{propcmexists}, which is used in the proof of Proposition \ref
{proplamb_d_inflambc_d}. Before we do that, we need some preparations
concerning the theory of continuous selection in~\cite{Michael56} and
\cite{Brown89}.

\begin{defn} Let $X$ be a topological space.
\begin{enumerate}[(iii)]
\item[(i)] We say $X$ is a $T_1$ space if for any distinct points
$x,y\in X$, there exist open sets $U_x$ and $U_y$ such that $U_x$
contains $x$ but not $y$, and $U_y$ contains $y$ but not~$x$.
\item[(ii)] We say $X$ is a $T_2$(Hausdorff) space if for any distinct
points $x,y\in X$, there exist open sets $U_x$ and $U_y$ such that
$x\in U_x$, $y\in U_y$ and $U_x\cap U_y=\varnothing$.
\item[(iii)] We say $X$ is a paracompact space if for any collection
$\{X_\alpha\}_{\alpha\in\A}$ of open sets in $X$ such that $\bigcup_{\alpha\in\A}X_\alpha=X$, there exists a collection $\{X_\beta\}_{\beta
\in\B}$ of open sets in $X$ satisfying:
\begin{enumerate}[(3)]
\item[(1)] each $X_\beta$ is a subset of some $X_\alpha$;
\item[(2)] $\bigcup_{\beta\in\B}X_\beta=X$;
\item[(3)] given $x\in X$, there exists an open neighborhood of $x$
which intersects only finitely many elements in $\{X_\beta\}_{\beta
\in\B}$.
\end{enumerate}
\end{enumerate}
\end{defn}
%
\begin{defn}
Let $X, Y$ be topological spaces. A set-valued map $\phi\dvtx X\mapsto2^Y$
is lower semicontinuous if, whenever $V\subset Y$ is open in $Y$, the
set $\{x\in X \mid\phi(x)\cap V\neq\varnothing\}$ is open in $X$.
\end{defn}

The main theorem in~\cite{Michael56}, Theorem 3.2$''$, gives the
following result for continuous selection.
%
\begin{prop}
Let $X$ be a $T_1$ paracompact space, $Y$ be a Banach space and $\phi
\dvtx X\mapsto2^{Y}$ be a set-valued map such that $\phi(x)$ is a closed
convex subset of $Y$ for each $x\in X$. Then, if $\phi$ is lower
semicontinuous, there exists a continuous function $f\dvtx X\mapsto Y$ such
that $f(x)\in\phi(x)$ for all $x\in X$.
\end{prop}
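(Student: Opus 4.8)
The plan is to follow Michael's two-stage argument: first manufacture continuous \emph{approximate} selections, and then iterate these to build a uniformly Cauchy sequence of continuous maps whose limit is an exact selection. Convexity of the values is the decisive ingredient in the approximate step; closedness of the values $\phi(x)$ is used only at the very end, to pass from ``within $\eps$ of $\phi(x)$ for every $\eps$'' to ``in $\phi(x)$''.

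\emph{Step 1 (approximate selections).} I would first prove: if $\psi:X\mapsto 2^Y$ is lower semicontinuous with every $\psi(x)$ a nonempty convex set (not necessarily closed), then for each $\eps>0$ there is a continuous $f:X\mapsto Y$ with $d(f(x),\psi(x))<\eps$ for all $x\in X$. To see this, for each $x\in X$ choose $y_x\in\psi(x)$; by lower semicontinuity the set $U_x:=\{x'\in X \mid \psi(x')\cap B_\eps(y_x)\neq\emptyset\}$ is open and contains $x$, so $\{U_x\}_{x\in X}$ is an open cover of $X$. Since $X$ is $T_1$ and paracompact, pass to a locally finite open refinement and a subordinate partition of unity $\{p_i\}_{i\in I}$ with $\operatorname{supp} p_i\subset U_{x_i}$ for suitable points $x_i\in X$. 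Set $f:=\sum_{i\in I}p_i\,y_{x_i}$, a locally finite sum of continuous $Y$-valued maps, hence continuous. For fixed $x$, every index $i$ with $p_i(x)>0$ has $x\in U_{x_i}$, i.e. $d(y_{x_i},\psi(x))<\eps$, so $y_{x_i}\in\psi(x)+B_\eps(0)$; since $\psi(x)$ is convex this set is convex, and $f(x)$, a convex combination of such points, lies in it, giving $d(f(x),\psi(x))<\eps$.

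\emph{Step 2 (iteration).} Put $\eps_n:=2^{-n}$. Apply Step 1 to $\psi=\phi$ with $\eps_1$ to get a continuous $f_1$ with $d(f_1(x),\phi(x))<\eps_1$. Inductively, given continuous $f_n$ with $d(f_n(x),\phi(x))<\eps_n$, define $\phi_n(x):=\phi(x)\cap B_{\eps_n}(f_n(x))$; this is nonempty (by the inductive estimate) and convex. I must check $\phi_n$ is lower semicontinuous: for open $V\subset Y$ and $x_0$ with $\phi_n(x_0)\cap V\neq\emptyset$, pick $y_0$ in that intersection, pick a small ball $B_r(y_0)\subset V$ with room left inside $B_{\eps_n}(f_n(x_0))$, and then combine lower semicontinuity of $\phi$ applied to $B_r(y_0)$ with continuity of $f_n$ at $x_0$ to produce a neighbourhood of $x_0$ on which $\phi_n(\cdot)\cap V\neq\emptyset$. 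Now apply Step 1 to $\phi_n$ with $\eps_{n+1}$ to obtain continuous $f_{n+1}$ with $d(f_{n+1}(x),\phi_n(x))<\eps_{n+1}$; since $\phi_n(x)\subset B_{\eps_n}(f_n(x))$ this gives both $\|f_{n+1}(x)-f_n(x)\|<\eps_n+\eps_{n+1}$ and $d(f_{n+1}(x),\phi(x))<\eps_{n+1}$. The first bound makes $\{f_n\}$ uniformly Cauchy, so $f_n\to f$ uniformly with $f$ continuous; the second, together with $1$-Lipschitz dependence of $d(\cdot,\phi(x))$, yields $d(f(x),\phi(x))=\lim_n d(f_n(x),\phi(x))=0$, and closedness of $\phi(x)$ gives $f(x)\in\phi(x)$.

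\emph{Main obstacle.} The delicate point is Step 1: this is exactly where paracompactness (to obtain a locally finite partition of unity subordinate to the cover $\{U_x\}$) and convexity (to keep the resulting convex combination within $\eps$ of $\phi(x)$) must be used in tandem, and dropping either hypothesis breaks the argument. A secondary technical nuisance is checking that the ``shrunken'' maps $\phi_n$ remain lower semicontinuous even though the ball $B_{\eps_n}(f_n(x))$ moves with $x$; this is handled by the elementary neighbourhood argument indicated above, using continuity of $f_n$.
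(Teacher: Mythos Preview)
The paper does not prove this proposition at all; it is simply quoted as Michael's selection theorem (Theorem~3.2$''$ of \cite{Michael56}) and used as a black box. Your proposal therefore supplies what the paper omits, and what you have written is precisely Michael's original two-stage argument: build continuous $\eps$-approximate selections using a partition of unity subordinate to the cover $\{U_x\}$ together with convexity of the values, then iterate with shrinking $\eps_n=2^{-n}$ on the intersected carriers $\phi_n(x)=\phi(x)\cap B_{\eps_n}(f_n(x))$ to obtain a uniformly Cauchy sequence whose limit is an exact selection. The verification that $\phi_n$ remains lower semicontinuous and the final appeal to closedness of $\phi(x)$ are handled correctly.

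One small caveat worth flagging: in Step~1 you invoke a partition of unity subordinate to a locally finite refinement, justified by ``$X$ is $T_1$ and paracompact.'' The standard route to partitions of unity goes through normality, which follows from paracompact~$+$~Hausdorff; a bare $T_1$ paracompact space need not be normal (e.g.\ an infinite set with the cofinite topology). Michael's own convention is that ``paracompact'' includes Hausdorff, and the paper is almost certainly following that convention despite writing $T_1$; so this is a wrinkle in the statement rather than a defect in your argument, but you may want to say ``Hausdorff'' explicitly when you write it up.
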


Since the lower semicontinuity of $\phi$ can be difficult to prove in
general, one may wonder whether there is a weaker condition sufficient
for continuous selection. Brown~\cite{Brown89} worked toward this
direction and characterized the weakest possible condition (it is
therefore sufficient and necessary). For the special case where $X$ is
a Hausdorff paracompact space and $Y$ is a real linear space with
finite dimension~$n^*$, given a set-valued map $\phi\dvtx X\mapsto2^Y$, a
sequence $\{\phi^{(n)}\}_{n\in\N}$ of set-valued maps was introduced in
\cite{Brown89} via the following iteration:
%
\begin{eqnarray}
\label{phi^n} \phi^{(1)}(x)&:=& \bigl\{y\in\phi(x) \mid
\mbox{Given } V \mbox{ open in } Y \mbox{ s.t. } y\in V,\nonumber\\
&&\hspace*{3pt}\mbox{ there
is a neighborhood $U$ of $x$}
\mbox{ s.t. } \forall x'\in U, \exists
y'\in\phi \bigl(x'\bigr)\cap V\bigr\};\hspace*{-35pt}
\\
\phi^{(n)}(x)&:=&\bigl(\phi^{(n-1)}\bigr)^{(1)}(x)
\qquad\mbox{for } n\ge2.\nonumber
\end{eqnarray}
The following result, taken from~\cite{Brown89}, Theorem 4.3,
characterizes the possibility of continuous selection using $\phi^{(n^*)}$.
%
\begin{prop}\label{propphi^nnotempty}
Let $X$ be a Hausdorff paracompact space, $Y$ be a real linear space
with finite dimension $n^*$ and $\phi\dvtx X\mapsto2^{Y}$ be a set-valued
map such that $\phi(x)$ is a closed convex subset of $Y$ for each
$x\in
X$. Then, there exists a continuous function $f\dvtx X\mapsto Y$ such that
$f(x)\in\phi(x)$ for all $x\in X$ if and only if $\phi^{(n^*)}(x)\neq
\varnothing$ for all $x\in X$.
\end{prop}

In this paper, we would like to take $X= \bar{D}$ and $Y=\bS^d$, where
$D\subset E$ is a smooth bounded domain. Note that $\bar{D}$ is
Hausdorff and paracompact as it is a metric space in $\R^d$ (see, e.g.,
\cite{Kelley55}, Corollary 5.35), and $\bS^d$ is a real linear space
with dimension $n^*:=d(d+1)/2$. Fix two continuous functions $\gamma
,\Gamma\dvtx  E\mapsto(0,\infty)$ with $\gamma\le\Gamma$, we consider the
operator $F_{\gamma,\Gamma}\dvtx E\times\bS^d\mapsto\R$ defined by
%
\begin{equation}
\label{defnFgamma} F_{\gamma,\Gamma}(x,M):=\frac{1}{2}
\M^+_{\gamma(x),\Gamma
(x)}(M)=\frac
{1}{2}\sup_{A\in\A(\gamma(x),\Gamma(x))}\operatorname{Tr}(AM).
\end{equation}
Observe that $F_{\gamma,\Gamma}$ also satisfies (\ref{H1})--(\ref{H4}),
and in particular $F_{\theta,\Theta}=F$. Given $m\in\N$, we intend to
show that there exists a continuous function $c_m\dvtx \bar{D}\mapsto\bS^d$
such that for all $x\in\bar{D}$, $c_m(x)\in\A(\gamma(x),\Gamma(x))$
and $F_{\gamma,\Gamma}(x,\break D^2\eta_{D})\le L^{c_m(\cdot)}\eta_{D}(x)+1/m$, with $\eta_{D}$ given in Lemma
\ref{lemexistence}.
Note that since $\eta_{D}\in C^{2}(\bar{D})$ by Proposition~\ref
{propexistence}, $D^2\eta_{D}$ is well defined on $\partial D$.
Also, see Proposition~\ref{proplamb_d_inflambc_d} for the purpose of
finding such a function $c_m$. We then define the set-valued map
$\varphi\dvtx D\mapsto\bS^d$ by
%
\begin{eqnarray}
\varphi(x)&:=& \bigl\{M\in\bS^d \mid M\in\A\bigl(\gamma (x),
\Gamma(x)\bigr) \mbox{ and }\nonumber\\[-8pt]\\[-8pt]
&&\hspace*{5pt} F_{\gamma,\Gamma}\bigl(x,D^2
\eta_{D}\bigr)\le L^M\eta_{D}(x)+1/m \bigr\}.\nonumber
\end{eqnarray}
For any $x\in\bar{D}$, we see from the definition of $F_{\gamma
,\Gamma
}$ that $\varphi(x)\neq\varnothing$. Moreover, $\varphi(x)$ is by
definition a closed convex subset of $\bS^d$. Then, we define $\varphi^{(n)}$ inductively as in (\ref{phi^n}) for all $n\in\N$. In view of
Proposition~\ref{propphi^nnotempty}, such a function $c_m$ exists if
$\varphi^{(n^*)}(x)\neq\varnothing$ for all $x\in\bar{D}$. We claim that
this is true. Actually, we will prove a stronger result in the next
lemma: given $x\in\bar{D}$, $\varphi^{(n)}(x)\neq\varnothing$ for all
$n\in\N$.

Recall that $B_{\delta}(x)$ denotes the open ball in $\R^d$ centered at
$x\in\R^d$ with radius $\delta>0$. In the following, we will denote by
$B^{\bar{D}}_\delta(x)$ the corresponding open ball in $\bar{D}$ under
the relative topology, that is, $B^{\bar{D}}_\delta(x):=B_\delta
(x)\cap
\bar{D}$. Similarly, we will denote by $B^{\bS^d}_\delta(M)$ the
corresponding open ball in $\bS^d$ under the topology induced by \mbox{$\|
\cdot\|$} in (\ref{norm1}).
%
\begin{lem}\label{lemAinphi}
Fix a smooth bounded domain $D\subset E$, two continuous functions
$\gamma,\Gamma\dvtx E\mapsto(0,\infty)$ with $\gamma\le\Gamma$, and
$m\in\N
$. Let $\eta_{D}$ be given as in Lemma~\ref{lemexistence}. Then,
given $x\in\bar{D}$, if $M\in\varphi(x)$ satisfies
%
\begin{equation}
\label{condition} F_{\gamma,\Gamma}\bigl(x,D^2\eta_{D}
\bigr)< L^M\eta_{D}(x)+1/m,
\end{equation}
then $M\in\varphi^{(n)}(x)$ for all $n\in\N$.
\end{lem}
\begin{pf}
Fix $M\in\varphi(x)$ such that (\ref{condition}) holds. We will first
show that $M\in\varphi^{(1)}(x)$, and then complete the proof by an
induction argument. Take $0\le\zeta<1/m$ such that $F_{\gamma,\Gamma
}(x,D^2\eta_{D})=L^M\eta_{D}(x)+\zeta$. Set $\nu:=1/m-\zeta>0$.
Recall that $\eta_{D}\in C^2(\bar{D})$ from Proposition \ref
{propexistence}. By the continuity of the maps $x\mapsto F_{\gamma
,\Gamma}(x,D^2\eta_{D}(x))$ [thanks to (\ref{H4})] and $x\mapsto
L^M\eta_{D}(x)$, we can take $\delta_1>0$ small enough such that
the following holds for any $x'\in B^{\bar{D}}_{\delta_1}(x)$:
%
\begin{eqnarray}
\label{sup<L^A} F_{\gamma,\Gamma}\bigl(x',D^2
\eta_{D}\bigr)&<& F_{\gamma,\Gamma}\bigl(x,D^2
\eta_{D}\bigr)+\frac{\nu}{3}=L^M\eta_{D}(x)+
\zeta+\frac{\nu}{3}\nonumber\\[-8pt]\\[-8pt]
&<& L^M\eta_{D}\bigl(x'
\bigr)+\zeta+\frac{2\nu}{3}.\nonumber
\end{eqnarray}
Since\vspace*{1pt} $\|\cdot\|_{\max}\le\|\cdot\|$, the map
$(M,y)\mapsto L^M\eta_{D}(y)$ is continuous in $M$, uniformly in
$y\in\bar{D}$. It follows that there exists $\beta>0$ such that
%
\begin{equation}
\label{L^BL^A}\qquad \|N-M\|<\beta \quad\Rightarrow\quad
\bigl|L^N\eta_{D}(y)-L^M\eta_{D}(y)\bigr|<
\frac
{\nu}{3}\qquad\mbox{for all } y\in\bar{D}.
\end{equation}
Now, by the continuity of $\gamma$ and $\Gamma$ on $\bar{D}$, we can
take $\delta_2>0$ such that $\max\{|\gamma(x')-\gamma(x)|,|\Gamma
(x')-\Gamma(x)|\}<\beta$ for all $x'\in B^{\bar{D}}_{\delta_2}(x)$. For
each $x'\in B^{\bar{D}}_{\delta_2}(x)$, we pick $M'\in\bS^d$ satisfying
\[
e_i\bigl(M'\bigr)= \cases{ \gamma\bigl(x'
\bigr), &\quad if $e_i(M)< \gamma\bigl(x'\bigr)$,
\vspace*{1pt}\cr
e_i(M), &\quad if $e_i(M)\in\bigl[\gamma\bigl(x'
\bigr),\Gamma\bigl(x'\bigr)\bigr]$,
\vspace*{1pt}\cr
\Gamma\bigl(x'
\bigr), &\quad if $e_i(M)> \Gamma\bigl(x'\bigr)$. }
\]
By construction, $M'\in\A(\gamma(x'),\Gamma(x'))$ and $\|M'-M\|\le
\max
\{|\gamma(x')-\gamma(x)|$, $|\Gamma(x')-\Gamma(x)|\}<\beta$, which implies
%
\begin{equation}
\label{L^AL^A} \bigl|L^{M'}
\eta_{D}(y)-L^M\eta_{D}(y)\bigr|<\frac{\nu}{3}
\qquad\mbox{for all } y\in\bar{D}.
\end{equation}
Finally, set $U:=B^{\bar{D}}_{\delta}(x)$ with $\delta:=\delta_1\wedge
\delta_2$. Then by (\ref{sup<L^A}) and (\ref{L^AL^A}), for any
$x'\in
U$ there exists $M'\in B^{\bS^d}_\beta(M)$ such that $M'\in\A
(\gamma
(x'),\Gamma(x'))$ and
%
\begin{equation}
\label{sup<L^AA} F_{\gamma,\Gamma}
\bigl(x',D^2\eta_{D}\bigr)<L^{M'}
\eta_{D}\bigl(x'\bigr)+1/m,
\end{equation}
which shows that $M'\in\varphi(x')$. Given any open set $V$ in $\bS^d$
such that $M\in V$, since we may take $\beta>0$ in (\ref{L^BL^A})
small enough such that $B^{\bS^d}_\beta(M)\subset V$, we conclude that
$M'\in V$ also. It follows that $M\in\varphi^{(1)}(x)$.

Notice that what we have proved is the following result: for any $x\in
\bar{D}$, if $M\in\varphi(x)$ satisfies (\ref{condition}), then
$M\in
\varphi^{(1)}(x)$. Since $M'\in\varphi(x')$ satisfies (\ref{sup<L^AA}),
the above result immediately gives $M'\in\varphi^{(1)}(x')$. We then
obtain a stronger result: for any $x\in\bar{D}$, if $M\in\varphi(x)$
satisfies (\ref{condition}), then $M\in\varphi^{(2)}(x)$. But this
stronger result, when applied again to $M'\in\varphi(x')$ satisfying
(\ref{sup<L^AA}), gives $M'\in\varphi^{(2)}(x')$. We, therefore, obtain
that: for any $x\in\bar{D}$, if $M\in\varphi(x)$ satisfies (\ref
{condition}), then $M\in\varphi^{(3)}(x)$. We can then argue inductively
to conclude that $M\in\varphi^{(n)}(x)$ for all $n\in\N$.
\end{pf}
%
\begin{prop}\label{propcmexists}
Fix a smooth bounded domain $D\subset E$ and two continuous functions
$\gamma,\Gamma\dvtx E\mapsto(0,\infty)$ with $\gamma\le\Gamma$. Let
$\eta_{D}$ be given as in Lemma~\ref{lemexistence}. For any $m\in\N$,
there exists a continuous function $c_m\dvtx \bar{D}\mapsto\bS^d$ such that
\begin{eqnarray}
c_m(x)\in\A\bigl(\gamma(x),\Gamma(x)\bigr) \quad\mbox{and}\quad
F_{\gamma
,\Gamma
}\bigl(x,D^2\eta_{D}\bigr)\le
L^{c_m(\cdot)}\eta_{D}(x)+1/m\nonumber\\
&&\eqntext{\mbox{for all } x\in\bar{D}.}
\end{eqnarray}
\end{prop}
\begin{pf}
Fix $m\in\N$. As explained before Lemma~\ref{lemAinphi}, $\bar{D}$
is a Hausdorff paracompact space, $\bS^d$ is a real linear space with
dimension $n^*:=d(d+1)/2$ and $\varphi(x)$ is a closed convex subset of
$\bS^d$ for all $x\in\bar{D}$. For each $x\in\bar{D}$, by the
definition of $F_{\gamma,\Gamma}$ in (\ref{defnFgamma}), we can
always find some $M\in\varphi(x)$ satisfying (\ref{condition}). By
Lemma~\ref{lemAinphi}, this implies $\varphi^{(n)}(x)\neq
\varnothing$
for all $n\in\N$. In particular, we have $\varphi^{(n^*)}(x)\neq
\varnothing$ for all $x\in\bar{D}$. Then the desired result follows from
Proposition~\ref{propphi^nnotempty}.
\end{pf}

\section{\texorpdfstring{Proof of Lemma \lowercase{\protect\ref{lem_harnack}}\textup{(ii)}}
{Proof of Lemma 3.6(ii)}}
\label{sec5}\label{secappendixB}

\begin{pf*}{Proof of (\ref{Harnack2})}
Pick $x_0\in D$ and $R_0>0$ such that $B_{R_0}(x_0)\subset D$. For any
$0<R<R_0$, define
\[
v_n(x):=u_n (x_0+Rx ) \quad\mbox{and}\quad \bar
{H}(x,M):=H (x_0+Rx,M ).
\]
Then we deduce from (\ref{H1}) and (\ref{uneqn}) that
\begin{eqnarray*}
\bar{H}\bigl(x,D^2v_n(x)\bigr)+R^2
\delta_n v_n(x)&=&H \bigl(x_0+Rx,D^2v_n(x)
\bigr)+R^2\delta_n v_n(x)\\
&=&R^2f_n(x_0+Rx)
\qquad\mbox{in } B_{R_0/R}(0).
\end{eqnarray*}
Since $\bar{H}(x,M)$ satisfies (\ref{Hbdd}) in $B_{R_0/R}(0)$, we can
apply the estimate (\ref{Harnack}) to $v_n$ and get
\begin{eqnarray*}
\sup_{\bar{B}_R(x_0)}u_n &=& \sup_{\bar{B}_1(0)}v_n\le C
\Bigl\{\inf_{\bar
{B}_1(0)} v_n+R^2\|f_n
\|_{\mathcal{L}^d(B_{R_0}(x_0))} \Bigr\}\\
&=&C \Bigl\{ \inf_{\bar{B}_R(x_0)} u_n+R^2
\|f_n\|_{\mathcal
{L}^d(B_{R_0}(x_0))} \Bigr\},
\end{eqnarray*}
where $C>0$ depends only on $R_0$, $d$, $\lambda$, $\Lambda$, $\sup_n\delta_n$. 
\end{pf*}
\begin{pf*}{Proof of the H\"{o}lder continuity}
Now, fix a compact connected set $K\subset D$. Set $R_0:=\frac
{1}{2}d(\partial K, \partial D)>0$. By~\cite{McCoy65}, Lemma 2, there
exists some $k^*\in\N$ such that the set $K':=\{x\in\R^d\mid
d(x,K)\le
R_0\}\subset D$ has the following property: any two points in $K'$ can
be joined by a polygonal line of at most $k^*$ segments which lie
entirely in $K'$. Fix $x_0\in K'$. By the definition of $R_0$, we have
$B_{R_0}(x_0)\subset D$. For each $n\in\N$, we consider the
nondecreasing function $w^n\dvtx (0,R_0]\mapsto\R$ defined by
\[
w^n(R):= M^n_R-m^n_R
\qquad\mbox{where } M^n_R:=\max_{\bar
{B}_R(x_0)}u_n,
m^n_R:=\min_{\bar{B}_R(x_0)}u_n.\vadjust{\goodbreak}
\]
For each $R\in(0,R_0]$, we obtain from (\ref{uneqn}) that $\{
u_n-m^n_R\}_{n\in\N}$ is sequence of nonnegative continuous viscosity
solution to
\[
H\bigl(x,D^2\bigl(u_n-m^n_R
\bigr)\bigr)+\delta_n\bigl(u_n-m^n_R
\bigr)=f_n-\delta_n m^n_R
\qquad\mbox{in } B_R(x_0).
\]
By the estimate (\ref{Harnack2}), there is a constant $C>0$,
independent of $n$ and $x_0$, such that
%
\begin{eqnarray}
\label{est1}\quad M^n_{R/4}-m^n_R&=&
\sup_{\bar{B}_{R/4}(x_0)}\bigl(u_n(x)-m^n_R\bigr)
\le C\inf_{\bar
{B}_{R/4}(x_0)}\bigl(u_n(x)-m^n_R
\bigr)+A R^2\nonumber\\[-8pt]\\[-8pt]
&=& C\bigl(m^n_{R/4}-m^n_R
\bigr) +AR^2,\nonumber
\end{eqnarray}
where $A>0$ is a constant that depends on $C$ and $R_0$, but not $n$
[thanks to the uniform boundedness of $\{u_n\}_{n\in\N}$ and the
boundedness of $\{f_n\}_{n\in\N}$ in $\mathcal{L}^d(D)$]. Define
$\bar
{H}(x,M):= -H(x,-M)$. Then we deduce again from (\ref{uneqn}) that
$\{
M^n_R-u_n\}_{n\in\N}$ is a sequence of nonnegative continuous viscosity
solutions to
\begin{eqnarray*}
\bar{H}\bigl(x,D^2\bigl(M^n_R-u_n
\bigr)\bigr)+\delta_n\bigl(M^n_R-u_n
\bigr)&=&-H\bigl(x,D^2u_n\bigr)+ \delta_n
\bigl(M^n_R-u_n\bigr) \\
&=& -f_n+
\delta_n M^n_R \qquad\mbox{in }
B_R(x_0).
\end{eqnarray*}
Observe that $\bar{H}$ also satisfies (\ref{H1}) and (\ref{Hbdd}).
Thus, we can apply estimate (\ref{Harnack2}) and get
%
\begin{eqnarray}
\label{est2} M^n_{R}-m^n_{R/4}&=&
\sup_{\bar{B}_{R/4}(x_0)}\bigl(M^n_R-u_n(x)\bigr)
\le C\inf_{\bar
{B}_{R/4}(x_0)}\bigl(M^n_R-u_n(x)
\bigr)+A R^2\nonumber\\[-8pt]\\[-8pt]
&=& C\bigl(M^n_{R}-M^n_{R/4}
\bigr) +AR^2,\nonumber
\end{eqnarray}
where $C$ and $A$ are as above. Summing (\ref{est1}) and (\ref{est2}), we get
\[
w^n(R/4)=M^n_{R/4}-m^n_{R/4}
\le\frac
{C-1}{C+1}\bigl(M^n_R-m^n_R
\bigr)+A'R^2=\frac
{C-1}{C+1}w^n(R)+A'R^2,
\]
where $A'>0$ depends on $C$ and $R_0$, and is independent of $R$ and
$n$. By applying~\cite{GT-book-01}, Lemma 8.23, to the above inequality,
for any $\beta\in(0,1)$, we can find some $\tilde{C}>0$ (depending on
$C$, $R_0$ and $A'$, but not $n$) such that $w^n(R)\le\tilde
{C}R^\beta$, for all $R\le R_0$. This implies the following result: for
any $x,y\in K'$ with $|x-y|\le R_0$, we can take $x_0=x$ in the above
analysis and obtain $|u_n(x)-u_n(y)|\le w^n(|x-y|)\le\tilde
{C}|x-y|^\beta$ for all $n\in\N$. For the case where $|x-y|> R_0$,
recall that $x$ and $y$ can be joined by a polygonal line of $k$
segments which lie entirely in $K'$, for some $k\le k^*$. On the $j$th
segment, pick points $x^j_1, x^j_2,\ldots, x^j_{\ell_j}$ along the
segment such that $x^j_1, x^j_{\ell_j}$ are the two endpoints,
$|x^j_{i}-x^j_{i+1}|=R_0$ for $i=1,\ldots,\ell_j-2$ and $|x^j_{\ell
_j-1}-x^j_{\ell_j}|\le R_0$. Since $K'$ is bounded, there must be a
uniform bound $\ell^*>0$ such that $\ell_j\le\ell^*$ for all $j$. Then,
for all $n\in\N$, we have
\begin{eqnarray*}
\bigl|u_n(x)-u_n(y)\bigr|&\le&\sum_{j=1}^k
\sum_{i=1}^{\ell
_j-1}\bigl|u_n
\bigl(x^j_i\bigr)-u_n\bigl(x^j_{i+1}
\bigr)\bigr|\le\sum_{j=1}^k\sum
_{i=1}^{\ell
_j-1}\tilde{C}\bigl|x^j_i-x^j_{i+1}\bigr|^\beta
\\
&\le& k^*\ell^*\tilde{C}|x-y|^\beta.
\end{eqnarray*}
\upqed\end{pf*}
\end{appendix}

\section*{Acknowledgments}

We would like to express our gratitude to the anonymous Associate
Editor and referees whose comments helped us improve our paper significantly.



\printaddresses

\end{document}